\newcommand{\new}[1]{#1}
\newtheorem{theorem}{Theorem}
\newtheorem{example}{Example}
\newtheorem{lemma}{Lemma}
\newtheorem{corollary}{Corollary}
\newtheorem{condition}{Condition}
\theoremstyle{definition}
\newtheorem{definition}{Definition}
\newtheorem{remark}{Remark}
\newcommand{\green}[1]{\textcolor{green}{#1}}
\newcommand{\blue}[1]{\textcolor{blue}{#1}}
\newcommand{\red}[1]{\textcolor{red}{#1}}
\newcommand{\gray}[1]{\textcolor{lightgray}{#1}}
\newcommand{\orange}[1]{\textcolor{orange}{#1}}
\newcommand{\etal}{\textit{et al. }}
\newcommand{\FUNC}[1]{\textsc{\small{#1}}}
\newcommand{\Var}[1]{\textit{#1}}
\newcommand{\coded}[1]{\widetilde{#1}}
\newcommand{\sig}[1]{\langle#1\rangle}
\algrenewcommand\algorithmicindent{.8em}%
\algrenewcommand\algorithmiccomment[1]{\gray{\hfill // #1}}
\newcommand{\CommentX}[1]{\gray{// #1}}
\newcommand{\bbF}{\mathbb{F}}
\newcommand{\bfa}{\mathbf{a}}
\newcommand{\bfb}{\mathbf{b}}
\newcommand{\bfc}{\mathbf{c}}
\newcommand{\bfe}{\mathbf{e}}
\newcommand{\bfh}{\mathbf{h}}
\newcommand{\bfm}{\mathbf{m}}
\newcommand{\bfp}{\mathbf{p}}
\newcommand{\bfs}{\mathbf{s}}
\newcommand{\bfu}{\mathbf{u}}
\newcommand{\bfv}{\mathbf{v}}
\newcommand{\bfw}{\mathbf{w}}
\newcommand{\bfx}{\mathbf{x}}
\newcommand{\bfy}{\mathbf{y}}
\newcommand{\bfB}{\mathbf{B}}
\newcommand{\bfC}{\mathbf{C}}
\newcommand{\bfF}{\mathbf{F}}
\newcommand{\bfG}{\mathbf{G}}
\newcommand{\bfM}{\mathbf{M}}
\newcommand{\bfR}{\mathbf{R}}
\newcommand{\bfU}{\mathbf{U}}
\newcommand{\bfV}{\mathbf{V}}
\newcommand{\bfW}{\mathbf{W}}
\newcommand{\cA}{\mathcal{A}}
\newcommand{\cC}{\mathcal{C}}
\newcommand{\cF}{\mathcal{F}}
\newcommand{\cG}{\mathcal{G}}
\newcommand{\cI}{\mathcal{I}}
\newcommand{\cJ}{\mathcal{J}}
\newcommand{\cK}{\mathcal{K}}
\newcommand{\cL}{\mathcal{L}}
\newcommand{\cM}{\mathcal{M}}
\newcommand{\cS}{\mathcal{S}}
\newcommand{\itf}{\textit{f}}
\newcommand{\ith}{\textit{h}}
\newcommand{\itm}{\textit{m}}
\newcommand{\itr}{\textit{r}}
\newcommand{\itt}{\textit{t}}
\newcommand{\itv}{\textit{v}}
\newcommand{\header}{\textit{header}}
\newcommand{\type}{\textit{type}}
\newcommand{\payload}{\textit{payload}}
\newcommand{\fragment}{\textit{fragment}}
\newcommand{\highQC}{\textit{highQC}}
\newcommand{\prepareQC}{\textit{prepareQC}}
\newcommand{\precommitQC}{\textit{precommitQC}}
\newcommand{\commitQC}{\textit{commitQC}}
\newcommand{\lockedQC}{\textit{lockedQC}}
\newcommand{\viewNumber}{\textit{viewNumber}}
\newcommand{\curView}{\textit{curView}}
\newcommand{\qc}{\textit{qc}}
\newcommand{\checksum}{\textit{checksum}}
\newcommand{\checksums}{\textit{checksums}}
\newcommand{\prev}{\textit{prev}}
\newcommand{\cks}{\textit{cks}}
\newcommand{\ack}{\textit{ack}}
\newcommand{\signature}{\textit{sig}}
\newcommand{\partialSig}{\textit{partialSig}}
\newcommand{\px}{\Var{px}}
\newcommand{\ralg}[2]{line~#1, Algorithm~#2}
\newcommand{\ralgx}[3]{line~#1--#2, Algorithm~#3}
\newcommand\blfootnote[1]{%
  \begingroup
  \renewcommand\thefootnote{}\footnote{#1}%
  \addtocounter{footnote}{-1}%
  \endgroup
}
    \renewcommand{\ALG@beginalgorithmic}{\footnotesize}
\begin{document}

\title{Breaking Blockchain's Communication\\Barrier with Coded Computation}

\author{\textbf{Canran Wang} and \textbf{Netanel Raviv}\\Department of Computer Science and Engineering, Washington University in St. Louis, St. Louis, MO 63103.}

\maketitle
\thispagestyle{plain}
\pagestyle{plain}

\begin{abstract}
\blfootnote{Parts of this paper have previously appeared in~\cite{MyPaper}.}
Although blockchain, the supporting technology of various cryptocurrencies, has offered a potentially effective framework for numerous decentralized trust management systems, its performance is still sub-optimal in real-world networks. With limited bandwidth, the communication complexity for nodes to process a block scales with the growing network size and hence becomes the limiting factor of blockchain's performance.

In this paper, we suggest a re-design of existing blockchain systems, which addresses the issue of the communication burden. First, by employing techniques from Coded Computation, our scheme guarantees correct verification of transactions while reducing the bit complexity dramatically such that it grows logarithmically with the number of nodes. Second, with the adoption of techniques from Information Dispersal and State Machine Replication, the system is resilient to Byzantine faults and achieves linear message complexity. Third, we propose a novel 2-dimensional sharding strategy, which inherently supports cross-shard transactions, alleviating the need for complicated communication protocols between shards, while keeping the computation and storage benefits of sharding. 
\end{abstract}

\section{Introduction}\label{section:introduction}

Blockchain is an append-only decentralized system, in which data resides in a chain of blocks that are periodically proposed and agreed upon by a consensus mechanism.
Although it is a promising platform for various applications, its performance is sub-optimal due to the limited bandwidth and the scaling communication complexity, in terms of~\emph{message complexity} and~\emph{bit complexity}. 
Message complexity is measured by the number of transferred messages, and bit complexity is characterized by the number of communicated bits. 

The performance of Bitcoin~\cite{Bitcoin}, the first double-spending-resistent cryptocurrency in a public peer-to-peer network, is inherently limited by its design.
For a valid new block to be generated, the competing nodes invest the majority of time in solving Proof-of-Work (PoW) puzzles. 
Consensus is reached on the sole block proposed by the winner, which is then propagated to the remaining nodes and appended to each local chain. 
Such a concatenated consensus-then-propagation scheme fails to fully utilize the bandwidth of nodes, since the network remains idle during the PoW puzzle solving period. 

Meanwhile, the security of Bitcoin is guaranteed by the fact that the time interval between blocks is sufficiently greater than the block transmission time~\cite{OnScaling}. Otherwise, frequent forks, which occur when multiple blocks are proposed simultaneously cause temporary inconsistency between nodes, and jeopardize the safety of the system. In other words, the PoW puzzle should take a sufficiently long period of time to solve, compared with the required time for the majority of node to receive a block. Together, na\"{i}ve reparameterization such as reducing the difficulty of the PoW puzzle or enlarging the block size degrades security, and a comprehensive redesign is required to improve Bitcoin's performance.

A widely adopted paradigm to achieve high-performance blockchain systems is to parallelize consensus and propagation, and hence to maximize the efficiency of bandwidth usage~\cite{BitcoinNG,HybridConsensus,ByzCoin,Prism}. Works following this path inherit the PoW mechanism to periodically select an entity from the public network as a leader, which could be a node or a committee of nodes. 
The selected entity is allowed to continuously generate blocks in parallel with the leader election mechanism, until the next entity is selected. 
Compared with Bitcoin, this paradigm persistently utilizes the bandwidth of nodes, and hence improves system performance.

\new{
In another direction, researchers attempt to improve Bitcoin by replacing its PoW mechanism, which is seen as the root cause of the scalability issue and huge energy consumption. 
Proof-of-Stake (PoS) is a noteworthy alternative used by~\cite{Ouroboros,SnowWhite,Algorand}, which does not involve the computation-intensive PoW puzzle solving.  
Instead, the chance for each individual node being selected as the leader, or one of the leaders, is proportional to its~\emph{stake}, referring to the value resides in the blockchain system.
}


Although the aforementioned attempts improve the performance of blockchain to some extent, a fundamental obstacle remains. 
That is, \emph{every node must receive every transaction}. 
This requirement is paramount to the safety and decentralization of blockchain systems, but unfortunately leads to an inevitable~$\Omega(NP)$ bit complexity for a block~$\bfB$ containing~$P$ transactions to be confirmed, given a network of~$N$ nodes.

Sharding~\cite{SokSharding} is a novel paradigm proposed to address this problem. The network is sliced into multiple communities of a similar sizes, each individually processes a disjoint set of data~\cite{Elastico,RapidChain,Omniledger}. The constant community size reduces the communication complexity as one transaction is only propagated within one community. As a result, the system throughput scales with the number of nodes, as additional nodes form extra communities and process additional transactions.

In sharding-based blockchain designs, random node rotation, or even reassignment, is necessary to avoid concentration of adversaries in one community. 
Further, sharding creates a distinction between two types of transactions; a transaction is called intra-shard if the sender and the receiver belong to the same community, and called cross-shard otherwise. 
Hence, extra mechanisms are required in this path, which is an added complexity that degrades the system's performance and diminishes the benefits of sharding.

Coding has been introduced to bypass the requirement for every node to receive every transaction, which leads to the invention of~\emph{coded blockchain}. 
Duan~\etal propose \emph{BEAT3}~\cite{BEAT}, a BFT storage system that enables each node to periodically store a relatively small coded fragment generated from the whole data block. 
The error-correcting code guarantees reconstruction of the original data block from sufficiently many of fragments. 
The AVID-FP~\cite{AVID-FP} protocol is used to assure that fragments stored by correct nodes correspond to a unique original data block. 
However, as a BFT storage system, BEAT3 does not concern \emph{external validity}, which assures that the stored data is acceptable to a specific application~\cite{EV}. 
In Blockchain's scenario, nodes in BEAT3 cannot verify the correctness of each stored transaction.

The introduction of coded computation partially alleviates the security problems in sharding, and provides support for external validity. 
Li~\etal~\cite{Polyshard} proposed \emph{Polyshard}, which offers a novel separation between nodes and shards. Polyshard formulates the verification of transactions as computation tasks, one for each shard, to be solved across all nodes in a distributed manner. 
Using Lagrange Coded Computing (LCC)~\cite{LCC}, nodes individually compute a polynomial verification function over a \emph{coded chain} and a \emph{coded block}. 
Since verification is performed in a coded fashion, and a node does not verify or store transactions for any specific shard, the need for node rotation/reassignment is removed.

Polyshard implicitly assumes that the performance bottleneck stems from insufficient computation resources in nodes, rather than limited communication bandwidth, and considers the system as a computation cluster with a highly synchronous network. 
The bit complexity, however, is again~$O(NP)$, as Polyshard requires every node to firstly reach a consensus of the whole block~$\bfB$ and then perform encoding individually. 
\new{Otherwise, as pointed out in~\cite{Discrepancy}, the system can be broken by a discrepancy attack.}
Besides, the messages complexity is~$O(N^2)$, due to the fact that Polyshard involves an all-to-all communication operation.

\new{Finally, coding has been employed in blockchain system that allows~\emph{light nodes}.
Unlike~\emph{full nodes} that validate and store all transactions, light nodes only download the header of each block, and hence addressing the~$O(NP)$ bit complexity. 
The header contains the root of a Merkle tree whose transactions are the leaves; it allows light nodes to verify the inclusion of any transaction in the corresponding block by downloading a Merkle proof from full nodes.
However, the~\emph{data availability problem} arises, i.e., upon receiving a header and Merkle proof, a light node cannot assure the corresponding block is fully available to the network, while the undisclosed part of the block may be invalid.
A coding-based solution to data availability problem has bee proposed by~\cite{DAP} and further improved by~\cite{CMT}.
In this paper, we only consider full nodes, and leave the incorporation of light nodes for future work.}

\subsection*{{Our Contributions}}

In this paper, we propose a fundamental re-design for coded blockchain, which resolves many of the issues in contemporary coded blockchain systems. 
In particular, this re-design addresses the issue that every node should receive every transaction, and hereby resolves the presumably inevitable~$\Omega(NP)$ bit complexity, that is also prevalent in ordinary (that is, uncoded) blockchain systems. 
Further, it achieves linear message complexity by resolving the issue of all-to-all communication, which is message-heavy but necessary for decoding the results of the computation.
On top of this gain, we adopt Lagrange coded computing---similar to existing designs---to achieve comparable levels of decentralization and security guarantees with respect to uncoded (i.e., ordinary) blockchain. In detail,
\begin{enumerate}

\item By employing techniques from Lagrange Coded Computing~\cite{LCC}, our scheme allows nodes to perform verification on~\emph{coded transactions}, whose size is a fraction of the entire block. 
Our method incurs~$O(P\log^2M\log N)$
bit complexity to process a block with~$P$ transactions, where~$M$ is the total number of transactions in one shard.
\new{As an alternative interpretation, the average bit complexity to process a block is~$O(\log^2M\log N)$.}

\item  By devising techniques inspired by Information Dispersal and BFT SMR protocols, our design allows a leader node to securely distribute coded transactions, under the presence of a certain fraction of Byzantine nodes, with~$O(N)$ message complexity in the partial synchrony model. 
\new{In the suggested parameter regime and under standard cryptographic assumptions, our design is provably secure to any attack that aims at breaking the consistency of the system, and in particular the attack pointed out by~\cite{Discrepancy}.
}

\item We propose \emph{2-Dimensional Sharding}, a new technique which partitions the transactions based on their senders and receivers, respectively. This design provides inherent support for cross-shard transactions, alleviating the need for complicated communication mechanisms. 
More precisely, in our design there is \emph{no difference} between the verification process of cross- and intra-shard transactions. 

\item Our design inherits the \emph{unspent transaction output} (UTXO) model and formulates the verification process as computing a polynomial function~\new{with a degree that scales logarithmically with the number of transactions on a shard.} 
In detail, our scheme addresses the degree problem by replacing current cryptographic primitives (i.e, ECDSA, SHA256 and RIPEMD-160) by multivariate cryptography in the generation and verification of a transaction.  

\end{enumerate}

These contributions bring coded blockchain closer to feasibility. 
That is, our scheme achieves linear message complexity and logarithmic bit complexity and removes the boundary between shards with inherent support for cross-shard transactions. 
The rest of this paper is organized as follows. Section~\ref{section:background} introduces necessary background. 
Section~\ref{section:codedVerification} details the coded verification scheme. Section~\ref{section:communication} discusses the communication aspect of our design, including the propagation of transactions and the exchange of computation results. Section~\ref{section:discussion} analyzes the security, communication complexity, and the tradeoff between them. Section~\ref{section:future} discusses the future research directions.

\section{Background}\label{section:background}

\subsection{Lagrange Coded Computing}\label{subsection:LCC}
\new{Coded computing broadly refers to a family of coding-inspired solutions for straggler- and adversary-resilient distributed computation. Tasks of interest include matrix-vector multiplication~\cite{ShortDot}, matrix-matrix multiplication~\cite{PolynomialCode}, gradient-computations~\cite{GradientCoding,GradientCodingCyclic}, and more. 
Further works on the topic include exploitation of partial stragglers~\cite{PartialStragglers}, heterogeneous networks~\cite{HeterogeneousNetworks}, and timely coded computing~\cite{TimelyCodedComputing}.}

Lagrange Coded Computing~\cite{LCC} (LCC) is a recent development in the field of coded computation. 
The task of interest is computing a multivariate polynomial~$f(X)$ on each of the~$K$ datasets~$\{X_1,\ldots,X_K\}$. 
LCC employs the Lagrange polynomial to linearly combine the~$K$ datasets with~$T$ redundant datasets~$\{Z_1,\ldots,Z_T\}$ chosen uniformly at random, generating~$N$ distinct coded dataset~$\{\coded{X}_1,\ldots,\coded{X}_N\}$ with injected computational redundancy. 

The encoding of LCC is performed by first choosing mutually disjoint sets $\{\alpha_1,\ldots,\alpha_N\}$ and $\{\omega_1,\ldots,\omega_K,\ldots,\omega_{K+T}\}$ with elements in~$\bbF_q$. 
The generator matrix~$G_\cL$ is then defined as

\begin{equation}\label{eq:generator matrix}
G_\cL=
\begin{bmatrix}
\Phi_1(\alpha_1) & \Phi_1(\alpha_2) & \ldots &\Phi_1(\alpha_N) \\
\vdots &\vdots&\ddots&\vdots\\
\Phi_{K+T}(\alpha_1) & \Phi_{K+T}(\alpha_2) & \ldots & \Phi_{K+T}(\alpha_N) \\
\end{bmatrix}, 
\end{equation}
where~$\Phi_k(z)$ is the Lagrange polynomial
\begin{align}\label{equation:LagrangePolynomial}
    \Phi_k(z)= \prod_{j,k\in[K+T], j\neq k} \frac{z-\omega_j}{\omega_k - \omega_j}.
\end{align}

The coded datasets is generated as~$(\coded{X}_1,\ldots,\coded{X}_N)=(X_1,\ldots,X_K,Z_1,\ldots,Z_T)\cdot G_\cL.$
Every worker node~$i\in[N]$ computes and returns a coded result~$f(\coded{X}_i)$. The leader obtains $f(X_1),\ldots,f(X_K)$ by performing decoding on collected coded results.

LCC achieves the optimal tradeoff between resiliency, security and privacy. It tolerates up to~$S$ stragglers and~$A$ adversarial nodes, defined as working nodes that are unresponsive or return erroneous results, respectively. In addition, with proper incorporation of random keys, it also prevents the exposure of the original datasets to sets of at most~$T$ colluding workers, as long as 
$$(K+T-1)\deg~f+S+2A+1\leq N.$$

\subsection{State Machine Replication}\label{subsection:smr}
\new{ 
The state machine replication (SMR) approach~\cite{SMR,SMRT} formulates a service, e.g., a network file system, as a state machine to be replicated in participating nodes. 
The state can be altered by client-issued service~\emph{requests}. 
To ensure the consistency of the states, nodes must agree on a total order of execution for requests. 

An SMR implementation must guarantee~\emph{safety} and~\emph{liveness}. 
Safety suggests that no two nodes confirm different order of requests, and~\emph{liveness} imposes that the system continuously accepts and executes new requests.
Further, an SMR protocol is said to be Byzantine Fault-Tolerant (BFT) if it is resilient to Byzantine faults, as coined by Lamport~\etal\cite{ByzGen}, which are defined as arbitrary (and possibly malicious) behaviour of nodes. 

Network models plays an important role in the design of SMR protocols.
In an asynchronous systems, message can be delayed by any finite amount of time, but eventual delivery is guaranteed. 
BFT SMR protocols which operate in this model employ randomization to bypass the famous FLP impossibility~\cite{FLP}. This impossibility result states that in the presence of even one faulty node (not necessarily Byzantine), it is impossible to guarantee consensus with a deterministic protocol. 
Works following this path include~\cite{Honeybadger, Dumbo, BEAT}.

As proposed by Dwork~\etal\cite{DLS}, partial synchrony is another noteworthy network model. 
In this setting, message delivery is asynchronous until an unknown Global Stabilization Time (GST). 
After GST, the system becomes synchronous, where message delay is bounded by a known constant~$\Delta$.
PBFT~\cite{PBFT} is the first practical implementation of BFT SMR in the the partial synchrony model.
It guarantees safety always, and provides liveness when the system becomes synchronous.

PBFT employs a leader to propose client-issued requests, and it takes two phases of all-to-all communication for the decision on one request. 
To prohibit Byzantine leaders from proposing different requests to different nodes, a proposal is considered valid only after being signed by a quorum of~$N-f$ nodes in the first phase, known as a~\emph{quorum certificate (QC)}, where~$f$ is the number of Byzantine nodes. 
Next, nodes commit the request after receiving another~$N-f$ votes in the second phase.
A quorum contains enough nodes such that any two quorums must intersect on at least one correct node~\cite{ByzantineQuorumSystems}.
Such a property guarantees that correct nodes entering the second phase are consistent on the same request.
In addition, it assures that the proposals from subsequent leaders (should the previous one crash) are consistent in request and hence maintains safety during leader switches.
This celebrated two-phase paradigm serves as the foundation of future leader-based BFT SMR protocols~\cite{Zyzzyva, AZyzzyva,SBFT,Tendermint}.

Bitcoin coined the word blockchain, providing an alternative implementation of SMR, particularly for value transfer systems in large networks.
It maintains an ordered sequence (chain) of blocks (requests), each contains transactions that incur value (bitcoins) transfers between clients.
Nodes invest computation power into PoW puzzle solving for the right to propose the next block; they are incentived by a reward in values.
In particular, nodes look for a new block by trial and error. 
The new block must extends the current chain (i.e., contains a hash pointer to the last block on the chain), and the hash value of which must satisfy a certain rule (e.g., begin with a sequence zeros). 
The winner of the competition disseminates its block to the network by gossip protocol, which is then appended to each local chain.
Unlike the protocol discussed earlier, the safety of Bitcoin relies on synchrony. 
Further, the~\emph{finality} property (i.e., a consensus once reached cannot be reverted) of Bitcoin is probabilistic. 
In practice, a block is considered irrevertible after being followed by six new blocks.
Numerous blockchain designs have been introduced to improve Bitcoin (see Section~\ref{section:introduction}).

HotStuff~\cite{Hotstuff} bridges PBFT-like protocols and Bitcoin-like protocols. 
It extends the two-phase paradigm of PBFT to three phases, each contains a nearly identical communication operation between the leader and the nodes.
Due to this remarkable simplicity, HotStuff can be easily pipelined; i.e., the second phase on a block functions as the first phase on the following block, as well as the last phase on the preceding block. 
Therefore, a block is irrevitible after three new blocks being appended to it, which is similar to the case of Bitcoin.
Besides, the extra phase allows HotStuff to be the first protocol that simultaneously achieves linear message complexity and~\emph{responsiveness} during leader switches. 
A leader switch is said to be responsive if the new leader only has to collect a quorum of leader-switch messages, and there is no requirement for it to wait for a predefined time period. 

Due to these merits, we adopt HotStuff as the core consensus protocol, and employ techniques from coded computation and information dispersal (define next) to reduce bit complexity.
Our work can be regarded as a communication-efficient implementation of coded state machine~\cite{CodedStateMachine}, which simultaneously maintains~$K$ state machines (shards) and employ coded computation to combat Byzantine faults.}
\begin{remark}
\new{BFT SMR protocols focus on the communication complexity induced by reaching a consensus on the order of the requests. 
It is generally assumed that each request is broadcast to every node by the issuing client, and this process is out of the scope of communication complexity analysis. 
However, blockchain systems usually require the leader to collect and distribute transactions, which must be considered in analyzing the communication complexity. }
\end{remark}

\subsection{Information Dispersal}\label{subsection:ID}

In a coded distributed information system, a file~$X\in \bbF_q^{\delta K}$ to be stored is first partitioned to~$K$ parts $X=(X_1^\intercal,\ldots,X_K^\intercal)$ where~$X_k^\intercal\in\bbF_q^{\delta\times 1}$. A Maximum Distance Separable (MDS) error-correcting code~$\cC$, induced by a \emph{generator matrix}~${G_\cC\in\bbF_q^{K\times N}}$, is used to generate~$N$ coded fragments $\coded{X}=(\coded{X}_1^\intercal,\ldots,\coded{X}_N^\intercal)=X\cdot G_\cC$. Each of the~$N$ nodes stores one coded fragment. The MDS property of~$\cC$ codes guarantees that any~$K\times K$ submatrix of~$G_\cL$ is of full rank and hence any~$K$ coded fragments are sufficient to reconstruct~$X$, tolerating up to~$N-K$ crashes.

Research in this field normally 
concerns a scenario where an external client wants to \emph{disperse} a file~$X$ to the system. That is, for every node~$i$ to store the corresponding coded fragment~$\coded{X}_i$. 
Byzantine faults can cause inconsistency of coded fragments, i.e., nodes might store coded fragments that do not correspond to the same file~$X$. 
Efforts has been made on developing protocols to combat Byzantine faults in this scenario. 



AVID-FP (where FP stands for \emph{fingerprinting})~\cite{AVID-FP} enables a client to distribute coded fragments of some file~$X$ to nodes in a distributed system, along with a checksum, i.e., a list of~\emph{fingerprints} of every coded fragment. AVID-FP inherits the properties of Cachin's Asynchronous Verifiable Information Dispersal (AVID) protocol~\cite{AVID}, with additional fingerprints. The fingerprints, generated by a homomorphic fingerprinting function (defined formally in the sequel) preserves the structure of error-correcting codes, and allows node~$i$ to verify that the received fragment corresponds to a unique file~$X$. In this paper, we propose an efficient transaction propagation scheme that integrates the steps of AVID-FP and coding techniques (see Section~\ref{section:communication}). 



\subsection{The Unspent Transaction Output (UTXO) Model}
In the UTXO model, value resides in transactions, instead of client accounts. A transaction has inputs and outputs. An unspent output of an old transaction serves as an input to a new transaction, incurring a value transfer between the two. The old UTXO is then invalidated, since it has been spent, and new UTXO is created in the new transaction.

The UTXO model makes extensive use of cryptographic hash functions and digital signatures. The uninformed reader is referred to~\cite[Sec.~2]{Bitcoin} for a thorough introduction to the topic. In a nutshell, a transaction output contains the amount of stored value and the intended receiver's \emph{address}, which is the hash value of her public key. Besides, the sender attaches his public key and signs the transaction with his secret key. 
For a transaction to be valid, the hash value computed from the sender's public key must match the address in the referenced UTXO. Also, the signature must be valid when checked by the public key. This two-step verification process guarantees the sender's possession of the public and secret keys, proves his identity as the receiver of the redeemed UTXO, and protects the integrity of the new transaction.

\new{
Although a transaction may have multiple inputs and outputs, we adopt a simplified UTXO model in our scheme
for clarity, where a transaction has exactly one input and one output, and transfers one indivisible coin. 
}

\subsection{Cryptographic Primitives}\label{subsection:cryptoPrimitives}

We assume that a public key infrastructure (PKI) exists among nodes. That is, every node~$i$ can create a signature~$\sig{m}_{\sigma_i}$ on a message~$m$ using its private key~$\sigma_i$.
Meanwhile, such a signature can be verified by the corresponding public key, which is shared by all nodes. 
Further, we employ a~\emph{threshold signature}~\cite{thresholdSignature} scheme. 
A~$(t,n)$-threshold signature scheme~$\pi$ contains a single public key shared by all nodes. 
Every node~$i$ possess a private key~$\pi_i$ which allows it to create a~\emph{partial signature}~$\sig{m}_{\pi,i}$ on message~$m$. 
A valid threshold signature~$\sig{m}_{\pi}=\Var{tcombine}(m,\{\sig{m}_{\pi,i}\}_{i\in\cI})$ can be produced using function~$\Var{tcombine}$ from a set of partial signatures~$\{\sig{m}_{\pi,i}\}_{i\in\cI}$ of size~$|\cI|=t$, but not smaller. 
Hence, it is guaranteed that the message~$m$ has been signed by~$t$ nodes if the signature verification function~$\Var{tverify}(m, \sig{m}_\pi)$ returns true.

In addition, in order to formulate the verification of transactions as the computation of polynomials, clients use a multivariate public key cryptosystem~(MPKC)~\cite{UOV,Rainbow,Gui} as a signature scheme.
MPKC is based on the multivariate quadratic (MQ) problem, which is believed to be hard even for quantum computers. 
An MQ problem involves a system of~$m$ quadratic polynomials~$\{p^{(1)},\ldots,p^{(m)}\}$ in~$n$ variables~$\{y_1,\ldots,y_n\}$ over some finite field~$\bbF_q$, i.e.,
$$
\bfp(\bfy)= \sum_{0<i\leq j<n}\bfa_{(i,j)}y_iy_j+\sum_{0<i<n}\bfb_iy_i+\bfc,
$$
where~$\bfa$,~$\bfb$, and~$\bfc$ are vectors in~$\bbF_q^m$. The solution is a vector~$\bfu =(u_1,\ldots,u_n)\in\bbF^n$ such that 
$\bfp(\bfu)=(0,\ldots,0)\in\bbF_q^m$.

In general, the public key of a MPKC is the set of coefficients of the quadratic polynomial system. A valid signature~$\bfs\in\bbF_q^n$ on a message~$\bfw\in\bbF_q^m$ is the solution to the quadratic system~$\bfp(\bfy)=\bfw$. In addition to MQ-based signature schemes, hash functions based on multivariate polynomials of low degree have been studied~\cite{LC,MQAnalysis,MQHash}. In the remainder of this paper, we assume a polynomial hash function over~$\bbF_q$ of a constant degree.

\section{Coded Verification}\label{section:codedVerification}

In this section, we first introduce our general settings and assumptions.
\new{
Based on these settings, we discuss the verification of transactions.
As in the UTXO model, the verification process starts from fetching an existing transaction stored in the chain, and proceeds with the address check and signature verification process. 
Together, the entire verification is formulated as computing a polynomial function.}
Consequently, we demonstrate the incorporation of Lagrange Coded Computing, showing how verification can be performed in a coded manner.

\subsection{Setting}

\begin{figure}
    \centering

\tikzset{every picture/.style={line width=0.75pt}} 

\begin{tikzpicture}[x=0.75pt,y=0.75pt,yscale=-1,xscale=1]

\draw  [dash pattern={on 4.5pt off 4.5pt}] (145,30.56) .. controls (145,24.73) and (149.73,20) .. (155.56,20) -- (444.44,20) .. controls (450.27,20) and (455,24.73) .. (455,30.56) -- (455,179.44) .. controls (455,185.27) and (450.27,190) .. (444.44,190) -- (155.56,190) .. controls (149.73,190) and (145,185.27) .. (145,179.44) -- cycle ;
\draw  [fill={rgb, 255:red, 80; green, 227; blue, 194 }  ,fill opacity=0.4 ] (189.28,300) -- (175.93,300) -- (175.93,280) -- (195,280) -- (195,294.28) -- cycle -- (189.28,300) ; \draw   (195,294.28) -- (190.42,295.42) -- (189.28,300) ;
\draw  [fill={rgb, 255:red, 80; green, 227; blue, 194 }  ,fill opacity=0.4 ] (189.28,335) -- (175.93,335) -- (175.93,315) -- (195,315) -- (195,329.28) -- cycle -- (189.28,335) ; \draw   (195,329.28) -- (190.42,330.42) -- (189.28,335) ;
\draw  [fill={rgb, 255:red, 80; green, 227; blue, 194 }  ,fill opacity=0.4 ] (189.28,370) -- (175.93,370) -- (175.93,350) -- (195,350) -- (195,364.28) -- cycle -- (189.28,370) ; \draw   (195,364.28) -- (190.42,365.42) -- (189.28,370) ;
\draw  [fill={rgb, 255:red, 80; green, 227; blue, 194 }  ,fill opacity=0.4 ] (189.28,405) -- (175.93,405) -- (175.93,385) -- (195,385) -- (195,399.28) -- cycle -- (189.28,405) ; \draw   (195,399.28) -- (190.42,400.42) -- (189.28,405) ;
\draw  [fill={rgb, 255:red, 80; green, 227; blue, 194 }  ,fill opacity=0.4 ] (224.28,300) -- (210.93,300) -- (210.93,280) -- (230,280) -- (230,294.28) -- cycle -- (224.28,300) ; \draw   (230,294.28) -- (225.42,295.42) -- (224.28,300) ;
\draw  [fill={rgb, 255:red, 80; green, 227; blue, 194 }  ,fill opacity=0.4 ] (224.28,335) -- (210.93,335) -- (210.93,315) -- (230,315) -- (230,329.28) -- cycle -- (224.28,335) ; \draw   (230,329.28) -- (225.42,330.42) -- (224.28,335) ;
\draw  [fill={rgb, 255:red, 80; green, 227; blue, 194 }  ,fill opacity=0.4 ] (224.28,370) -- (210.93,370) -- (210.93,350) -- (230,350) -- (230,364.28) -- cycle -- (224.28,370) ; \draw   (230,364.28) -- (225.42,365.42) -- (224.28,370) ;
\draw  [fill={rgb, 255:red, 80; green, 227; blue, 194 }  ,fill opacity=0.4 ] (224.28,405) -- (210.93,405) -- (210.93,385) -- (230,385) -- (230,399.28) -- cycle -- (224.28,405) ; \draw   (230,399.28) -- (225.42,400.42) -- (224.28,405) ;
\draw  [fill={rgb, 255:red, 80; green, 227; blue, 194 }  ,fill opacity=0.4 ] (258.35,300) -- (245,300) -- (245,280) -- (264.07,280) -- (264.07,294.28) -- cycle -- (258.35,300) ; \draw   (264.07,294.28) -- (259.5,295.42) -- (258.35,300) ;
\draw  [fill={rgb, 255:red, 80; green, 227; blue, 194 }  ,fill opacity=0.4 ] (258.35,335) -- (245,335) -- (245,315) -- (264.07,315) -- (264.07,329.28) -- cycle -- (258.35,335) ; \draw   (264.07,329.28) -- (259.5,330.42) -- (258.35,335) ;
\draw  [fill={rgb, 255:red, 80; green, 227; blue, 194 }  ,fill opacity=0.4 ] (258.35,370) -- (245,370) -- (245,350) -- (264.07,350) -- (264.07,364.28) -- cycle -- (258.35,370) ; \draw   (264.07,364.28) -- (259.5,365.42) -- (258.35,370) ;
\draw  [fill={rgb, 255:red, 80; green, 227; blue, 194 }  ,fill opacity=0.4 ] (258.35,405) -- (245,405) -- (245,385) -- (264.07,385) -- (264.07,399.28) -- cycle -- (258.35,405) ; \draw   (264.07,399.28) -- (259.5,400.42) -- (258.35,405) ;
\draw  [fill={rgb, 255:red, 80; green, 227; blue, 194 }  ,fill opacity=0.4 ] (293.35,300) -- (280,300) -- (280,280) -- (299.07,280) -- (299.07,294.28) -- cycle -- (293.35,300) ; \draw   (299.07,294.28) -- (294.5,295.42) -- (293.35,300) ;
\draw  [fill={rgb, 255:red, 80; green, 227; blue, 194 }  ,fill opacity=0.4 ] (293.35,335) -- (280,335) -- (280,315) -- (299.07,315) -- (299.07,329.28) -- cycle -- (293.35,335) ; \draw   (299.07,329.28) -- (294.5,330.42) -- (293.35,335) ;
\draw  [fill={rgb, 255:red, 80; green, 227; blue, 194 }  ,fill opacity=0.4 ] (293.35,370) -- (280,370) -- (280,350) -- (299.07,350) -- (299.07,364.28) -- cycle -- (293.35,370) ; \draw   (299.07,364.28) -- (294.5,365.42) -- (293.35,370) ;
\draw  [fill={rgb, 255:red, 80; green, 227; blue, 194 }  ,fill opacity=0.4 ] (293.35,405) -- (280,405) -- (280,385) -- (299.07,385) -- (299.07,399.28) -- cycle -- (293.35,405) ; \draw   (299.07,399.28) -- (294.5,400.42) -- (293.35,405) ;
\draw  [dash pattern={on 4.5pt off 4.5pt}] (145,281.28) .. controls (145,277.96) and (147.69,275.28) .. (151,275.28) -- (309,275.28) .. controls (312.31,275.28) and (315,277.96) .. (315,281.28) -- (315,299.28) .. controls (315,302.59) and (312.31,305.28) .. (309,305.28) -- (151,305.28) .. controls (147.69,305.28) and (145,302.59) .. (145,299.28) -- cycle ;
\draw  [dash pattern={on 4.5pt off 4.5pt}] (171,271) .. controls (171,267.69) and (173.69,265) .. (177,265) -- (195,265) .. controls (198.31,265) and (201,267.69) .. (201,271) -- (201,434) .. controls (201,437.31) and (198.31,440) .. (195,440) -- (177,440) .. controls (173.69,440) and (171,437.31) .. (171,434) -- cycle ;
\draw   (345,275) -- (395,275) -- (395,305) -- (345,305) -- cycle ; \draw   (351.25,275) -- (351.25,305) ; \draw   (388.75,275) -- (388.75,305) ;
\draw   (345,310) -- (395,310) -- (395,340) -- (345,340) -- cycle ; \draw   (351.25,310) -- (351.25,340) ; \draw   (388.75,310) -- (388.75,340) ;
\draw   (345,345) -- (395,345) -- (395,375) -- (345,375) -- cycle ; \draw   (351.25,345) -- (351.25,375) ; \draw   (388.75,345) -- (388.75,375) ;
\draw   (345,380) -- (395,380) -- (395,410) -- (345,410) -- cycle ; \draw   (351.25,380) -- (351.25,410) ; \draw   (388.75,380) -- (388.75,410) ;
\draw [color={rgb, 255:red, 74; green, 144; blue, 226 }  ,draw opacity=1 ]   (420,290) -- (398,290) ;
\draw [shift={(395,290)}, rotate = 360] [fill={rgb, 255:red, 74; green, 144; blue, 226 }  ,fill opacity=1 ][line width=0.08]  [draw opacity=0] (8.93,-4.29) -- (0,0) -- (8.93,4.29) -- cycle    ;
\draw [color={rgb, 255:red, 74; green, 144; blue, 226 }  ,draw opacity=1 ][fill={rgb, 255:red, 74; green, 144; blue, 226 }  ,fill opacity=1 ]   (420,170) -- (420,290) ;
\draw [color={rgb, 255:red, 74; green, 144; blue, 226 }  ,draw opacity=1 ][fill={rgb, 255:red, 74; green, 144; blue, 226 }  ,fill opacity=1 ]   (420,170) -- (390,170) ;
\draw [color={rgb, 255:red, 74; green, 144; blue, 226 }  ,draw opacity=1 ]   (315,290) -- (342,290) ;
\draw [shift={(345,290)}, rotate = 180] [fill={rgb, 255:red, 74; green, 144; blue, 226 }  ,fill opacity=1 ][line width=0.08]  [draw opacity=0] (8.93,-4.29) -- (0,0) -- (8.93,4.29) -- cycle    ;
\draw [color={rgb, 255:red, 126; green, 211; blue, 33 }  ,draw opacity=1 ]   (315,325) -- (342,325) ;
\draw [shift={(345,325)}, rotate = 180] [fill={rgb, 255:red, 126; green, 211; blue, 33 }  ,fill opacity=1 ][line width=0.08]  [draw opacity=0] (8.93,-4.29) -- (0,0) -- (8.93,4.29) -- cycle    ;
\draw [color={rgb, 255:red, 208; green, 2; blue, 27 }  ,draw opacity=1 ]   (315,360) -- (342,360) ;
\draw [shift={(345,360)}, rotate = 180] [fill={rgb, 255:red, 208; green, 2; blue, 27 }  ,fill opacity=1 ][line width=0.08]  [draw opacity=0] (8.93,-4.29) -- (0,0) -- (8.93,4.29) -- cycle    ;
\draw [color={rgb, 255:red, 248; green, 231; blue, 28 }  ,draw opacity=1 ]   (315,395) -- (342,395) ;
\draw [shift={(345,395)}, rotate = 180] [fill={rgb, 255:red, 248; green, 231; blue, 28 }  ,fill opacity=1 ][line width=0.08]  [draw opacity=0] (8.93,-4.29) -- (0,0) -- (8.93,4.29) -- cycle    ;
\draw   (145,221) .. controls (145,217.69) and (147.69,215) .. (151,215) -- (328,215) .. controls (331.31,215) and (334,217.69) .. (334,221) -- (334,239) .. controls (334,242.31) and (331.31,245) .. (328,245) -- (151,245) .. controls (147.69,245) and (145,242.31) .. (145,239) -- cycle ;
\draw  [dash pattern={on 4.5pt off 4.5pt}] (205,271) .. controls (205,267.69) and (207.69,265) .. (211,265) -- (229,265) .. controls (232.31,265) and (235,267.69) .. (235,271) -- (235,434) .. controls (235,437.31) and (232.31,440) .. (229,440) -- (211,440) .. controls (207.69,440) and (205,437.31) .. (205,434) -- cycle ;
\draw  [dash pattern={on 4.5pt off 4.5pt}] (240,271) .. controls (240,267.69) and (242.69,265) .. (246,265) -- (264,265) .. controls (267.31,265) and (270,267.69) .. (270,271) -- (270,434) .. controls (270,437.31) and (267.31,440) .. (264,440) -- (246,440) .. controls (242.69,440) and (240,437.31) .. (240,434) -- cycle ;
\draw  [dash pattern={on 4.5pt off 4.5pt}] (275,271) .. controls (275,267.69) and (277.69,265) .. (281,265) -- (299,265) .. controls (302.31,265) and (305,267.69) .. (305,271) -- (305,434) .. controls (305,437.31) and (302.31,440) .. (299,440) -- (281,440) .. controls (277.69,440) and (275,437.31) .. (275,434) -- cycle ;
\draw   (330,281) .. controls (330,272.16) and (337.16,265) .. (346,265) -- (394,265) .. controls (402.84,265) and (410,272.16) .. (410,281) -- (410,404) .. controls (410,412.84) and (402.84,420) .. (394,420) -- (346,420) .. controls (337.16,420) and (330,412.84) .. (330,404) -- cycle ;
\draw  [color={rgb, 255:red, 74; green, 144; blue, 226 }  ,draw opacity=1 ] (215.23,161) .. controls (215.23,157.69) and (217.92,155) .. (221.23,155) -- (384.37,155) .. controls (387.68,155) and (390.37,157.69) .. (390.37,161) -- (390.37,179) .. controls (390.37,182.31) and (387.68,185) .. (384.37,185) -- (221.23,185) .. controls (217.92,185) and (215.23,182.31) .. (215.23,179) -- cycle ;
\draw [color={rgb, 255:red, 74; green, 144; blue, 226 }  ,draw opacity=1 ]   (185,265) -- (185,248) ;
\draw [shift={(185,245)}, rotate = 450] [fill={rgb, 255:red, 74; green, 144; blue, 226 }  ,fill opacity=1 ][line width=0.08]  [draw opacity=0] (8.93,-4.29) -- (0,0) -- (8.93,4.29) -- cycle    ;
\draw [color={rgb, 255:red, 126; green, 211; blue, 33 }  ,draw opacity=1 ]   (220,265) -- (220,248) ;
\draw [shift={(220,245)}, rotate = 450] [fill={rgb, 255:red, 126; green, 211; blue, 33 }  ,fill opacity=1 ][line width=0.08]  [draw opacity=0] (8.93,-4.29) -- (0,0) -- (8.93,4.29) -- cycle    ;
\draw [color={rgb, 255:red, 208; green, 2; blue, 27 }  ,draw opacity=1 ]   (255,265) -- (255,248) ;
\draw [shift={(255,245)}, rotate = 450] [fill={rgb, 255:red, 208; green, 2; blue, 27 }  ,fill opacity=1 ][line width=0.08]  [draw opacity=0] (8.93,-4.29) -- (0,0) -- (8.93,4.29) -- cycle    ;
\draw [color={rgb, 255:red, 248; green, 231; blue, 28 }  ,draw opacity=1 ]   (290,265) -- (290,248) ;
\draw [shift={(290,245)}, rotate = 450] [fill={rgb, 255:red, 248; green, 231; blue, 28 }  ,fill opacity=1 ][line width=0.08]  [draw opacity=0] (8.93,-4.29) -- (0,0) -- (8.93,4.29) -- cycle    ;
\draw    (370,265) .. controls (369.65,240.17) and (356.1,230.21) .. (337.91,229.96) ;
\draw [shift={(335,230)}, rotate = 357.66999999999996] [fill={rgb, 255:red, 0; green, 0; blue, 0 }  ][line width=0.08]  [draw opacity=0] (8.93,-4.29) -- (0,0) -- (8.93,4.29) -- cycle    ;
\draw  [color={rgb, 255:red, 126; green, 211; blue, 33 }  ,draw opacity=1 ] (215.23,126.38) .. controls (215.23,123.12) and (217.87,120.48) .. (221.13,120.48) -- (384.46,120.48) .. controls (387.73,120.48) and (390.37,123.12) .. (390.37,126.38) -- (390.37,144.1) .. controls (390.37,147.36) and (387.73,150) .. (384.46,150) -- (221.13,150) .. controls (217.87,150) and (215.23,147.36) .. (215.23,144.1) -- cycle ;
\draw  [color={rgb, 255:red, 208; green, 2; blue, 27 }  ,draw opacity=1 ] (215.23,92.14) .. controls (215.23,88.99) and (217.79,86.43) .. (220.94,86.43) -- (384.65,86.43) .. controls (387.81,86.43) and (390.37,88.99) .. (390.37,92.14) -- (390.37,109.29) .. controls (390.37,112.44) and (387.81,115) .. (384.65,115) -- (220.94,115) .. controls (217.79,115) and (215.23,112.44) .. (215.23,109.29) -- cycle ;
\draw  [color={rgb, 255:red, 248; green, 231; blue, 28 }  ,draw opacity=1 ] (215.23,57.14) .. controls (215.23,53.99) and (217.79,51.43) .. (220.94,51.43) -- (384.65,51.43) .. controls (387.81,51.43) and (390.37,53.99) .. (390.37,57.14) -- (390.37,74.29) .. controls (390.37,77.44) and (387.81,80) .. (384.65,80) -- (220.94,80) .. controls (217.79,80) and (215.23,77.44) .. (215.23,74.29) -- cycle ;
\draw [color={rgb, 255:red, 248; green, 231; blue, 28 }  ,draw opacity=1 ]   (155,65) -- (212,65) ;
\draw [shift={(215,65)}, rotate = 180] [fill={rgb, 255:red, 248; green, 231; blue, 28 }  ,fill opacity=1 ][line width=0.08]  [draw opacity=0] (8.93,-4.29) -- (0,0) -- (8.93,4.29) -- cycle    ;
\draw [color={rgb, 255:red, 248; green, 231; blue, 28 }  ,draw opacity=1 ]   (155,65) -- (155,215) ;

\draw  [dash pattern={on 4.5pt off 4.5pt}] (145,316) .. controls (145,312.69) and (147.69,310) .. (151,310) -- (309,310) .. controls (312.31,310) and (315,312.69) .. (315,316) -- (315,334) .. controls (315,337.31) and (312.31,340) .. (309,340) -- (151,340) .. controls (147.69,340) and (145,337.31) .. (145,334) -- cycle ;
\draw  [dash pattern={on 4.5pt off 4.5pt}] (145,351) .. controls (145,347.69) and (147.69,345) .. (151,345) -- (309,345) .. controls (312.31,345) and (315,347.69) .. (315,351) -- (315,369) .. controls (315,372.31) and (312.31,375) .. (309,375) -- (151,375) .. controls (147.69,375) and (145,372.31) .. (145,369) -- cycle ;
\draw  [dash pattern={on 4.5pt off 4.5pt}] (145,386) .. controls (145,382.69) and (147.69,380) .. (151,380) -- (309,380) .. controls (312.31,380) and (315,382.69) .. (315,386) -- (315,404) .. controls (315,407.31) and (312.31,410) .. (309,410) -- (151,410) .. controls (147.69,410) and (145,407.31) .. (145,404) -- cycle ;
\draw [color={rgb, 255:red, 208; green, 2; blue, 27 }  ,draw opacity=1 ]   (165,100) -- (212,100) ;
\draw [shift={(215,100)}, rotate = 180] [fill={rgb, 255:red, 208; green, 2; blue, 27 }  ,fill opacity=1 ][line width=0.08]  [draw opacity=0] (8.93,-4.29) -- (0,0) -- (8.93,4.29) -- cycle    ;
\draw [color={rgb, 255:red, 208; green, 2; blue, 27 }  ,draw opacity=1 ]   (165,100) -- (165,215) ;

\draw [color={rgb, 255:red, 126; green, 211; blue, 33 }  ,draw opacity=1 ]   (175,135) -- (212,135) ;
\draw [shift={(215,135)}, rotate = 180] [fill={rgb, 255:red, 126; green, 211; blue, 33 }  ,fill opacity=1 ][line width=0.08]  [draw opacity=0] (8.93,-4.29) -- (0,0) -- (8.93,4.29) -- cycle    ;
\draw [color={rgb, 255:red, 126; green, 211; blue, 33 }  ,draw opacity=1 ]   (175,135) -- (175,215) ;

\draw [color={rgb, 255:red, 74; green, 144; blue, 226 }  ,draw opacity=1 ]   (185,170) -- (212,170) ;
\draw [shift={(215,170)}, rotate = 180] [fill={rgb, 255:red, 74; green, 144; blue, 226 }  ,fill opacity=1 ][line width=0.08]  [draw opacity=0] (8.93,-4.29) -- (0,0) -- (8.93,4.29) -- cycle    ;
\draw [color={rgb, 255:red, 74; green, 144; blue, 226 }  ,draw opacity=1 ]   (185,170) -- (185,215) ;

\draw [color={rgb, 255:red, 126; green, 211; blue, 33 }  ,draw opacity=1 ]   (430,325) -- (398,325) ;
\draw [shift={(395,325)}, rotate = 360] [fill={rgb, 255:red, 126; green, 211; blue, 33 }  ,fill opacity=1 ][line width=0.08]  [draw opacity=0] (8.93,-4.29) -- (0,0) -- (8.93,4.29) -- cycle    ;
\draw [color={rgb, 255:red, 126; green, 211; blue, 33 }  ,draw opacity=1 ][fill={rgb, 255:red, 74; green, 144; blue, 226 }  ,fill opacity=1 ]   (430,135) -- (430,325) ;
\draw [color={rgb, 255:red, 126; green, 211; blue, 33 }  ,draw opacity=1 ][fill={rgb, 255:red, 74; green, 144; blue, 226 }  ,fill opacity=1 ]   (430,135) -- (390,135) ;
\draw [color={rgb, 255:red, 208; green, 2; blue, 27 }  ,draw opacity=1 ]   (440,360) -- (398,360) ;
\draw [shift={(395,360)}, rotate = 360] [fill={rgb, 255:red, 208; green, 2; blue, 27 }  ,fill opacity=1 ][line width=0.08]  [draw opacity=0] (8.93,-4.29) -- (0,0) -- (8.93,4.29) -- cycle    ;
\draw [color={rgb, 255:red, 208; green, 2; blue, 27 }  ,draw opacity=1 ][fill={rgb, 255:red, 74; green, 144; blue, 226 }  ,fill opacity=1 ]   (440,100) -- (440,360) ;
\draw [color={rgb, 255:red, 208; green, 2; blue, 27 }  ,draw opacity=1 ][fill={rgb, 255:red, 74; green, 144; blue, 226 }  ,fill opacity=1 ]   (440,100) -- (390,100) ;
\draw [color={rgb, 255:red, 248; green, 231; blue, 28 }  ,draw opacity=1 ]   (450,395) -- (398,395) ;
\draw [shift={(395,395)}, rotate = 360] [fill={rgb, 255:red, 248; green, 231; blue, 28 }  ,fill opacity=1 ][line width=0.08]  [draw opacity=0] (8.93,-4.29) -- (0,0) -- (8.93,4.29) -- cycle    ;
\draw [color={rgb, 255:red, 248; green, 231; blue, 28 }  ,draw opacity=1 ][fill={rgb, 255:red, 74; green, 144; blue, 226 }  ,fill opacity=1 ]   (450,65) -- (450,395) ;
\draw [color={rgb, 255:red, 248; green, 231; blue, 28 }  ,draw opacity=1 ][fill={rgb, 255:red, 74; green, 144; blue, 226 }  ,fill opacity=1 ]   (450,65) -- (390,65) ;

\draw (146,281) node [anchor=north west][inner sep=0.75pt]   [align=left] {$\bfh_1^{(t)}$};
\draw (359,282) node [anchor=north west][inner sep=0.75pt]   [align=left] {$\bfF^{(t)}$};
\draw (359,317) node [anchor=north west][inner sep=0.75pt]   [align=left] {$\bfF^{(t)}$};
\draw (359,352) node [anchor=north west][inner sep=0.75pt]   [align=left] {$\bfF^{(t)}$};
\draw (359,387) node [anchor=north west][inner sep=0.75pt]   [align=left] {$\bfF^{(t)}$};
\draw (176,416) node [anchor=north west][inner sep=0.75pt]   [align=left] {$\bfv_1^{(t)}$};
\draw (217,222) node [anchor=north west][inner sep=0.75pt]  [font=\large] [align=left] {Filter};
\draw (211,416) node [anchor=north west][inner sep=0.75pt]   [align=left] {$\bfv_2^{(t)}$};
\draw (246,416) node [anchor=north west][inner sep=0.75pt]   [align=left] {$\bfv_3^{(t)}$};
\draw (281,416) node [anchor=north west][inner sep=0.75pt]   [align=left] {$\bfv_4^{(t)}$};
\draw (217,161) node [anchor=north west][inner sep=0.75pt]   [align=left] {$\bfv_1^{(t-1)}\shortrightarrow  \bfv_1^{(t-2) }\shortrightarrow \ldots\shortrightarrow  \bfv_1^{(1)}$ };
\draw (217,126) node [anchor=north west][inner sep=0.75pt]   [align=left] {$\bfv_2^{(t-1)}\shortrightarrow  \bfv_2^{(t-2) }\shortrightarrow \ldots\shortrightarrow  \bfv_2^{(1)}$ };
\draw (217,91.5) node [anchor=north west][inner sep=0.75pt]   [align=left] {$\bfv_3^{(t-1)}\shortrightarrow  \bfv_3^{(t-2) }\shortrightarrow \ldots\shortrightarrow  \bfv_3^{(1)}$ };
\draw (217,56.5) node [anchor=north west][inner sep=0.75pt]   [align=left] {$\bfv_4^{(t-1)}\shortrightarrow  \bfv_4^{(t-2) }\shortrightarrow \ldots\shortrightarrow  \bfv_4^{(1)}$ };
\draw (146,315.72) node [anchor=north west][inner sep=0.75pt]   [align=left] {$\bfh_2^{(t)}$};
\draw (146,350.72) node [anchor=north west][inner sep=0.75pt]   [align=left] {$\bfh_3^{(t)}$};
\draw (146,385.72) node [anchor=north west][inner sep=0.75pt]   [align=left] {$\bfh_4^{(t)}$};
\draw (186,157) node [anchor=north west][inner sep=0.75pt]  [font=\scriptsize] [align=left] {Link};
\draw (271,27) node [anchor=north west][inner sep=0.75pt]  [font=\large] [align=left] {Shards};
\draw (334,432) node [anchor=north west][inner sep=0.75pt]   [align=left] {Verification\\ \ \ Results};
\draw (207,447) node [anchor=north west][inner sep=0.75pt]   [align=left] {Block~$\bfB^{(t)}$};
\draw (177.93,283) node [anchor=north west][inner sep=0.75pt]  [font=\scriptsize] [align=left] {$b_{11}$};
\draw (212.93,283) node [anchor=north west][inner sep=0.75pt]  [font=\scriptsize] [align=left] {$b_{12}$};
\draw (247,283) node [anchor=north west][inner sep=0.75pt]  [font=\scriptsize] [align=left] {$b_{13}$};
\draw (282,283) node [anchor=north west][inner sep=0.75pt]  [font=\scriptsize] [align=left] {$b_{14}$};
\draw (177.93,318) node [anchor=north west][inner sep=0.75pt]  [font=\scriptsize] [align=left] {$b_{21}$};
\draw (212.93,318) node [anchor=north west][inner sep=0.75pt]  [font=\scriptsize] [align=left] {$b_{22}$};
\draw (247,318) node [anchor=north west][inner sep=0.75pt]  [font=\scriptsize] [align=left] {$b_{23}$};
\draw (282,318) node [anchor=north west][inner sep=0.75pt]  [font=\scriptsize] [align=left] {$b_{24}$};
\draw (177.93,353) node [anchor=north west][inner sep=0.75pt]  [font=\scriptsize] [align=left] {$b_{31}$};
\draw (212.93,353) node [anchor=north west][inner sep=0.75pt]  [font=\scriptsize] [align=left] {$b_{32}$};
\draw (247,353) node [anchor=north west][inner sep=0.75pt]  [font=\scriptsize] [align=left] {$b_{33}$};
\draw (282,353) node [anchor=north west][inner sep=0.75pt]  [font=\scriptsize] [align=left] {$b_{34}$};
\draw (177.93,388) node [anchor=north west][inner sep=0.75pt]  [font=\scriptsize] [align=left] {$b_{41}$};
\draw (212.93,388) node [anchor=north west][inner sep=0.75pt]  [font=\scriptsize] [align=left] {$b_{42}$};
\draw (247,388) node [anchor=north west][inner sep=0.75pt]  [font=\scriptsize] [align=left] {$b_{43}$};
\draw (282,388) node [anchor=north west][inner sep=0.75pt]  [font=\scriptsize] [align=left] {$b_{44}$};
\draw (390,150) node [anchor=north west][inner sep=0.75pt]  [font=\small] [align=left] {$\bfV_1^{(t)}$};
\draw (390,115) node [anchor=north west][inner sep=0.75pt]  [font=\small] [align=left] {$\bfV_2^{(t)}$};
\draw (390,81) node [anchor=north west][inner sep=0.75pt]  [font=\small] [align=left] {$\bfV_3^{(t)}$};
\draw (390,45) node [anchor=north west][inner sep=0.75pt]  [font=\small] [align=left] {$\bfV_4^{(t)}$};
\draw (183,122) node [anchor=north west][inner sep=0.75pt]  [font=\scriptsize] [align=left] {Link};
\draw (183,87) node [anchor=north west][inner sep=0.75pt]  [font=\scriptsize] [align=left] {Link};
\draw (183,52) node [anchor=north west][inner sep=0.75pt]  [font=\scriptsize] [align=left] {Link};

\end{tikzpicture}

\caption{Illustration of \emph{2-Dimensional Sharding} in a blockchain system with 4 shards. The block~$B^{(t)}$ is horizontally sliced into \emph{outgoing strips}~${\bfh_1^{(t)},\ldots,\bfh_4^{(t)}}$ and vertically sliced into \emph{incoming strips}~$\bfv_1^{(t)},\ldots,\bfv_4^{(t)}$. The outgoing strip~$\bfh_k^{(t)}$ is verified against the corresponding shard~$\bfV_k^{(t)}$ using the verification function~$\bfF^{(t)}$, for~$k=1,2,3,4$. Together, the verification results reveal the validity of every transaction, and help to filter out the invalid transactions in the incoming strips, which are finally linked to the corresponding shards.}
    \label{fig:Fig.1}
\end{figure}
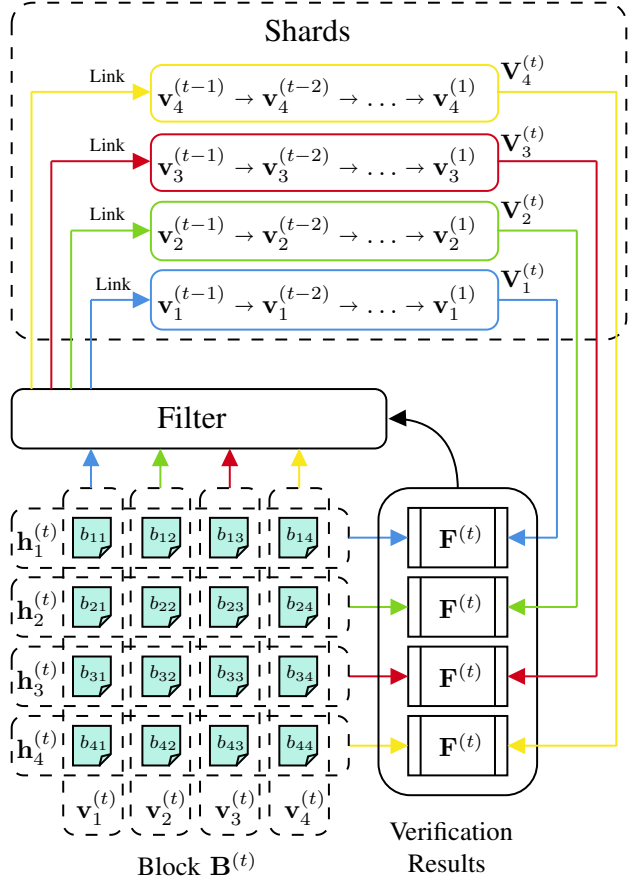

The system includes~$N$ nodes and~$K$ client communities of equal size. 
The nodes are responsible for collecting, verifying and storing transactions; clients issue transactions and transfer coins between each other. 
Note that clients are affiliated with communities, whereas nodes are not.
Transactions are proposed by clients and verified by nodes periodically during time intervals, called \emph{epochs}, denoted by a discrete time unit~$t$. 

We formulate the block containing all transactions in epoch~$t$ as a matrix,
\begin{equation}\label{eq:block}
\bfB^{(t)}=
\begin{bmatrix}
b_{1,1} & b_{1,2} & \ldots & b_{1,K} \\
\vdots &\vdots&\ddots&\vdots\\
b_{K,1} & b_{K,2} & \ldots & b_{K,K} \\
\end{bmatrix},
\end{equation}
where every~$b_{k,r} \in \bbF_q^{Q\times R}$ is a \emph{tiny block}, formed as a concatenation of~$Q$ transactions \new{with senders in community~$k$ and receivers in community~$r$}; each transaction~$\bfx\in\bbF_q^R$ is a vector of length~$R$ over some finite field~$\bbF_q$.

We partition the block~$\bfB^{(t)}$ into \emph{outgoing strips} and \emph{incoming strips}, as shown in Fig.~\ref{fig:Fig.1}.
An outgoing strip
$${\bfh_k^{(t)}=(b_{k,1},\ldots,b_{k,K})\in (\bbF_q^{Q\times R})^K}$$ 
is a vector containing transactions with senders in community~$k$. 
Similarly, an incoming strip
$${\bfv_k^{(t)}=(b_{1,k},\ldots,b_{K,k})\in (\bbF_q^{Q\times R})^K}$$ stands for a collection of all transactions in epoch~$t$ \new{with receivers in community~$k$.}
Equivalently, one can view an outgoing strip~$\bfh^{(t)}_k$ as the~$k$-th row of matrix~$\bfB^{(t)}$, and incoming strip~$\bfv^{(t)}_k$ as the transpose of the~$k$-th column of matrix~$\bfB^{(t)}$, i.e.,
\begin{equation*}
\bfB^{(t)}=\begin{bmatrix}(\bfv^{(t)}_{1})^\intercal,(\bfv^{(t)}_{2})^\intercal,\ldots,(\bfv^{(t)}_{K})^\intercal)\end{bmatrix}
=\begin{bmatrix} (\bfh^{(t)}_{1})^\intercal, & (\bfh^{(t)}_{2})^\intercal, & \ldots ,& (\bfh^{(t)}_{K})^\intercal\end{bmatrix}^\intercal.\\
\end{equation*}

\new{Formally, we define a~\emph{shard}~$\bfV_k^{(t)}=\begin{pmatrix}\bfv_k^{(1)},\ldots,\bfv_k^{(t)}
\end{pmatrix}$ as a concatenation of \emph{incoming} strips associated with community~$k$ from epoch~$1$ to epoch~$t$, which contains~$M(t)$ transactions. 
Note that our definition of a shard is slightly different from the existing literature\footnote{Sharding in blockchain broadly refers to the practice of partitioning nodes among different committees (in a possibly random fashion), each individually handles a portion of verification and storage~\cite{SokSharding}. On the contrary, we do not assign individual node to any specific committee, but partition clients among communities. Meanwhile, we partition transactions based on the community of the receivers, and each partition is called a shard.}. 
This definition provides a inherent support for cross-shard transactions, which will be elaborated in sequel.
}

\begin{remark}
A coded outgoing strip~$\bfh_k^{(t)}$ contains transactions redeeming UTXOs from shard~$\bfV_k^{(t)}$. 
In later sections, we present a polynomial function that verifies~$\bfh_k^{(t)}$ against~$\bfV_k^{(t)}$. 
The results are used to filter out invalid transactions in the incoming strip~$\bfv_k^{(t)}$ before they are appended to the shard. 
As a result, our setting does not differentiate intra- and cross-shard transactions, alleviating the need for sophisticated cross-shard communication mechanisms.
\end{remark}

\subsection{Polynomial Verification Function}\label{subsection:pvf}
In our setting, a new transaction is of the form~$\bfx_{new}=(\bfu_{new},\bfp_{new},\bfa_{new},\bfs_{new})$, where:
\begin{enumerate}
\item~$\bfu_{new}\in\bbF_{q}^{T^{(t-1)}\times 2}$ is a \emph{lookup matrix} used to index the previous transaction, where~$T^{(t)}=\log_2M(t)$.
\item~$\bfp_{new}\in\bbF_q^B$ is the sender's public key, containing all coefficients of an MQ system.
\item~$\bfa_{new}\in\bbF_q^C$ is the receiver's address, i.e., the hash value of the receiver's public key.
\item~$\bfs_{new}\in \bbF_q^D$ is the senders signature on~$\bfx_{new}'=(\bfu_{new},\bfp_{new},\bfa_{new})$
\end{enumerate}

Verifying~$\bfx_{new}$ includes three crucial parts:

\begin{itemize}
    \item \textbf{Transaction Fetching}: To fetch the corresponding old transaction~$\bfx_{old}=(\bfu_{old},\bfp_{old},\bfa_{old},\bfs_{old})$ from which~$\bfx_{new}$ redeems the~UTXO.
    \item \textbf{Address Checking}: To check whether the hash value of~$\bfp_{new}$ matches~$\bfa_{old}$.
    \item \textbf{Signature Verification}: To verify that~$\bfs_{new}$ is a valid signature on the hash value of~$\bfx_{new}'$ by using the public key~$\bfp_{new}$.
\end{itemize}

In detail, the above parts are executed as follows.

\subsubsection{Transaction Fetching}
The lookup matrix~$\bfu_{new}$ has exactly one~1-entry and one~0-entry in each row. Hence, every transaction in~$\bfV^{(t)}$ can be uniquely indexed by a lookup matrix. The verifier views shard~$\bfV^{(t)}$ as~$T^{(t-1)}$-dimensional tensor in~$(\bbF_{q^R})^{2\times 2\times ...\times 2}$, and therefore every transaction can be conveniently expressed as a tensor entry~${\bfV^{(t)}_{i_1,\ldots,i_{T^{(t-1)}}}\in \bbF_{q^R}}$.

To fetch a transaction, one computes a multilinear polynomial,
\begin{equation*}
    \begin{split}
{\Var{fetch}}^{(t)}(\bfu,\bfV^{(t)})= \sum_{(i_1,\ldots,i_{T^{(t-1)}})\in \{1,2\}^{T^{(t-1)}}} \left(\prod_{j=1}^{T^{(t-1)}} \bfu_{j,i_j}\right) \bfV^{(t)}_{i_1,\ldots,i_{T^{(t-1)}}}
    \end{split}
\end{equation*}
which takes a shard~$\bfV^{(t)}$ and a lookup table~$\bfu$ as inputs and yields the transaction~$\bfx_{\bfu}\in\bbF_{q^R}$ indexed by~$\bfu$. The degree of~${\Var{fetch}}^{(t)}$ is~$T^{(t-1)}+1$. Note that the subscript~$k$ is omitted in~${\Var{fetch}}^{(t)}$ since it can be applied to any shard. 

\begin{example}
In a shard~$\bfV$ that contains~$8$ transactions, to fetch one of them, one would compute
\begin{equation*}
    \begin{split}
        \Var{fetch}(\bfu,\bfV)&=
                \bfu_{1,1}\bfu_{2,1}\bfu_{3,1}~\bfV_{1,1,1}+
                 \bfu_{1,1}\bfu_{2,1}\bfu_{3,2}~\bfV_{1,1,2}+
                 \bfu_{1,1}\bfu_{2,2}\bfu_{3,1}~\bfV_{1,2,1}+
                 \bfu_{1,1}\bfu_{2,2}\bfu_{3,2}~\bfV_{1,2,2}\\
                &\phantom{=}+\bfu_{1,2}\bfu_{2,1}\bfu_{3,1}~\bfV_{2,1,1}+
                \bfu_{1,2}\bfu_{2,1}\bfu_{3,2}~\bfV_{2,1,2}+
                \bfu_{1,2}\bfu_{2,2}\bfu_{3,1}~\bfV_{2,2,1}+
                \bfu_{1,2}\bfu_{2,2}\bfu_{3,2}~\bfV_{2,2,2}.
    \end{split}    
\end{equation*}
Since the lookup matrix contains only one 1-entry and one 0-entry, only the entry indexed by~$\bfu$ has coefficient~$1$, while the rest have coefficients~$0$.
\end{example}
\subsubsection{Address Checking}

Based on Section \ref{subsection:cryptoPrimitives}, we assume a multivariate polynomial~${\Var{hash1}}:\bbF_q^B\shortrightarrow \bbF_q^C$ of a constant degree to serve as our first collision resistant hash function. 
Having obtained~$\bfx_{old}={\Var{fetch}}^{(t)}(\bfu_{new},\bfV^{(t)}_k)$, the verifier then checks whether~${\Var{hash1}}(\bfp_{new})=\bfa_{old}$ holds, which is expressed as a polynomial,
$${\Var{checkAddr}}(\bfp,\bfa)={\Var{hash1}}(\bfp)-\bfa.$$

Note that~$\bfp_{new}$ is accepted when~${{\Var{checkAddr}}(\bfp_{new},\bfa_{old})\in\bbF_q^C}$ is the all-zero vector. 

\subsubsection{Signature Verification}
The verifier needs to check the validity of the signature~$\bfs_{new}$. She first computes a hash digest~$\bfw={\Var{hash2}}(\bfu_{new}, \bfp_{new},\bfa_{new})=(w_1,\ldots,w_E)\in \bbF_q^E$, where~${\Var{hash2}}: \bbF_q^{A+B+C}\shortrightarrow \bbF_q^E$ is our second collision resistant hash function of a constant degree. Later, the verifier checks whether~$MQ(\bfp_{new},\bfs_{new})=\bfw$ holds, where,
$$
MQ(\bfp,\bfs)= \sum_{0<i\leq j<D}\bfa_{(i,j)}s_is_j+\sum_{0<i<D}\bfb_is_i+\bfc,
$$
and~$\bfa, \bfb, \bfc \in \bbF_q^E$ are vectors stored in~$\bfp_{new}$, serving as coefficients of the MQ problem. Equivalently, the verification of a signature~$\bfs$ in a transaction~$\bfx=(\bfu,\bfp,\bfa,\bfs)$ can be expressed as a polynomial,
$$
{\Var{checkSig}}(\bfx)= MQ(\bfp,\bfs)-{\Var{hash2}}(\bfu,\bfp,\bfa).
$$
Note that~$\bfs_{new}$ is accepted only when~${\Var{checkSig}}(\bfx_{new})=0$.

The above three parts focus on the verification of an individual transaction. We further employ them to verify the entire strip as follows.

\subsubsection{Verification of a strip}
Let~$\eta\in \bbF_{q}^{C+E}$ be the concatenation of~${\Var{checkAddr}}(\bfp_{new},\bfa_{old})$ and ${\Var{checkSig}}(\bfx_{new})$; a transaction~$\bfx$ is accepted if and only if~$\eta=f^{(t)}(\bfx,\bfV^{(t)})=0$. 
This information is further used to fliter out invalid trancations in the incoming strip.
Since the UTXOs redeemed by transactions in~$\bfh^{(t)}_{k}=(\bfx_1,\ldots,\bfx_{QK})$ all reside in~$\bfV^{(t)}_{k}$, we define a multivariate polynomial,
\begin{align*}
	F^{(t)}(\bfh^{(t)},\bfV^{(t-1)})=
	(f^{(t)}(\bfx_1,\bfV^{(t-1)})^\intercal,\ldots,f^{(t)}(\bfx_{QK},\bfV^{(t-1)})^\intercal),	
\end{align*}
~\new{of degree~$d$}, which yields an \emph{outgoing result strip}
\begin{equation}\label{eq:outgoingResultStrip}
\bfe^{(t)}_k=(r_{k,1},\ldots,r_{k,K}) \in (\bbF_q^{Q\times (C+E)})^K,
\end{equation}
defined as the~$k$-th row of the \emph{result matrix}
\begin{equation}\label{eq:result block}
\bfR^{(t)}=
\begin{bmatrix}
r_{1,1} & r_{1,2} & \ldots & r_{1,K} \\
\vdots &\vdots&\ddots&\vdots\\
r_{K,1} & r_{K,2} & \ldots & r_{K,K} \\
\end{bmatrix}.
\end{equation}

Each~\emph{tiny result block}~$r_{k,k'}$ contains~$Q$ entries of length $C+E$; one per every transaction in the tiny block~$b_{k,k'}$. The~$j$-th entry in~$r_{k,k'}$ is the result of computing~$f^{(t)}$ on~$\bfV^{(t)}_k$ and the~$j$-th transaction in~$b_{k,k'}$. Hence, the outgoing result strip~$\bfe^{(t)}_k$ reveals the validity of every transaction in the outgoing strip~$\bfh_k^{(t)}$, and the result matrix~$\bfR^{(t)}$ reveals the validity of every transaction in the block~$\bfB^{(t)}$. As shown in Fig.~\ref{fig:Fig.1}, the outgoing result strips are used to filter out the invalid transactions in the incoming strips before they are being appended to the corresponding shards. 

Similarly, the~\emph{incoming result strip}
\begin{equation}\label{eq:incoming result strip}
    \bfs_k^{(t)}=(r_{1,k},\ldots,r_{K,k})\in (\bbF_q^{Q\times (C+E)})^K
\end{equation}
is a transpose of the~$k$-th column of the result block. It reveals the validity of every transaction in the incoming strip~$\bfv_k^{(t)}$. We will employ this notation in Section~\ref{subsection:codedAppending}.

\begin{remark} [The degree of $F^{(t)}$]
\new{
The verification result of a transaction is the concatenation of functions~$\Var{checkAddr}$ and~$\Var{checkSig}$. By definition,~${\Var{checkAddr}}(\bfp,\bfa)={\Var{hash1}}(\bfp)-\bfa$, where~$\bfa$ is the output of function~$\Var{fetch}^{(t)}$.
Besides,~${\Var{checkSig}}(\bfx)= MQ(\bfp,\bfs)-{\Var{hash2}}(\bfu,\bfp,\bfa)$, where the degree of the multivariate function~$MQ(\bfp,\bfs)$ is~$3$.
Together, the degree of polynomial that verifies a transaction is
~$
d=\max(T^{(t-1)}+1,\deg \Var{hash1}, \deg \Var{hash2},3),
$
where ~$T^{(t-1)}+1$ is the degree of~$\Var{fetch}^{(t)}$.

As existing works~\cite{LC,MQAnalysis,MQHash} show the existence of secure polynomial hash functions with degree as low as~$3$, we assume that the degree of both $\Var{hash1}$ and~$\Var{hash2}$ is less than~$T^{(t-1)}$ in realistic blockchain systems (e.g., a blockchain system with~$10^6$ transactions in each shard has a verification function of degree~$d=20$). 
Hence, the polynomial~$F^{(t)}$ has a degree~$d=T^{(t-1)}+1$, which scales \emph{logarithmically} with the number of transactions in a shard.}
\end{remark}

\subsection{Coded Computation}\label{subsection:codedComputation}
Now that the verification of outgoing strips has been formulated as a low degree polynomial, we turn to describe how it is conducted in a coded fashion. In detail, every shard~$k\in[K]$ is assigned a unique scalar~$\omega_k\in\bbF_q$, and every node~$i\in[N]$ is assigned a unique scalar~$\alpha_i\in\bbF_q$.

Setting~$T=0$, the generator matrix in~\eqref{eq:generator matrix} becomes
\begin{equation}\label{eq:GL}
G_\cL=
\begin{bmatrix}
\Phi_1(\alpha_1) & \Phi_1(\alpha_2) & \ldots &\Phi_1(\alpha_N) \\
\vdots &\vdots&\ddots&\vdots\\
\Phi_{K}(\alpha_1) & \Phi_K(\alpha_2) & \ldots & \Phi_{K}(\alpha_N) \\
\end{bmatrix},
\end{equation}
where~$\Phi_k(z)$ is the Lagrange polynomial~\eqref{equation:LagrangePolynomial}.
For node~$i$, the coded outgoing strip and coded incoming strip are linear combinations of outgoing strips and incoming strips, respectively, i.e., 
\begin{align*}
(\coded{\bfh}_i^{(t)})^\intercal & =((\bfh_1^{(t)})^\intercal,\ldots,(\bfh_K^{(t)})^\intercal)\cdot (G_\cL)_i=(\bfB^{(t)})^\intercal\cdot (G_\cL)_i,\\
(\coded{\bfv}_i^{(t)})^\intercal & =((\bfv_1^{(t)})^\intercal,\ldots,(\bfv_K^{(t)})^\intercal)\cdot (G_\cL)_i\:=\bfB^{(t)}\cdot (G_\cL)_i,
\end{align*}
where~$(G_\cL)_i$ is the~$i$-th column of~$G_\cL$. Equivalently,~$\coded{\bfh}^{(t)}_{i}$ and~$\coded{\bfv}^{(t)}_{i}$ are evaluations of Lagrange polynomials~$\psi^{(t)}(z)$ and~$\phi^{(t)}(z)$ at~$\alpha_i$, respectively, where
\begin{equation*}
\psi^{(t)}(z) =\sum_{k=1}^K\bfh^{(t)}_{k}\prod_{j\neq k} \frac{z-\omega_j}{\omega_k - \omega_j}~\text{and}~
\phi^{(t)}(z) =\sum_{k=1}^K\bfv^{(t)}_{k}\prod_{j\neq k} \frac{z-\omega_j}{\omega_k - \omega_j}.
\end{equation*}

Every node~$i$ stores a coded shard~$\coded{\bfV}^{(t)}_{i}$, i.e., a node-specific linear combination of all shards,
$$\coded{\bfV}^{(t)}_{i}=\sum_{k=1}^K G_{k,i}\bfV^{(t)}_k=(\phi^{(1)}(\alpha_i),\ldots,\phi^{(t)}(\alpha_i)).$$

In epoch~$t$, every node~$i$ receives the coded strips~$\coded{\bfh}^{(t)}_{i}$ and~$\coded{\bfv}^{(t)}_{i}$; protocols for secure encoding and delivery of coded strips are given in Section~\ref{section:communication}. 
Node~$i$ computes the polynomial verification function~$F^{(t)}$ on~$\coded{\bfh}^{(t)}_{i}$ and the locally stored~$\coded{\bfV}_i^{t-1}$, and obtains a~\emph{coded outgoing result strip}
\begin{equation*}
\coded{\bfe}^{(t)}_{i}=F^{(t)}(\coded{\bfh}^{(t)}_{i}, \coded{\bfV}_i^{(t-1)})=F^{(t)}(\psi^{(t)}(\alpha_i),(\phi^{(1)}(\alpha_i),\ldots,\phi^{(t)}(\alpha_i)).
\end{equation*}

Formally, the coded outgoing result strip~$\coded{\bfe}^{(t)}_{i}$, as well as the (uncoded) outgoing result strip~$\bfe_k^{(t)}$ defined in Equation~\eqref{eq:outgoingResultStrip}, is an evaluation of a polynomial~$\bfF$, i.e.,
\begin{equation}\label{equation:FeCodedUncoded}
    \bfe_k^{(t)}=\bfF^{(t)}(\omega_k)~\text{and}~\coded{\bfe}^{(t)}_{i}=\bfF^{(t)}(\alpha_i),
\end{equation}
\begin{equation}\label{eq:verificationFunction}
    \text{where}~\bfF^{(t)}(z)=F^{(t)}(\psi^{(t)}(z),(\phi^{(1)}(z),\ldots,\phi^{(t)}(z)).
\end{equation}
\new{Since the degree of both~$\psi^{(t)}$ and~$\phi^{(t)}$ is~$K-1$, it follows that the degree of~$\bfF^{(t)}$ is~$(K-1)d$.}

Similar to the outgoing result strip defined in~\eqref{eq:outgoingResultStrip}, the coded outgoing result strip
\begin{equation}\label{eq:codedOutgoingResultStrip}
\coded{\bfe}^{(t)}_{i}=(\coded{\bfe}^{(t)}_{i,1},\ldots,\coded{\bfe}^{(t)}_{i,K}) \in (\bbF_q^{Q\times (C+E)})^K
\end{equation}
is a length-$K$ vector, in which the~$k$-th element~$\coded{\bfe}^{(t)}_{i,k}$ contains~$Q$ entries and equals to the verification result of the~$k$-th coded tiny block in the coded outgoing strip~$\coded{\bfh}^{(t)}_{i}$. 
Note that unlike the coded incoming strip or the coded outgoing strip, the coded outgoing result strip~$\coded{\bfe}^{(t)}_{i}$ is \textit{not} a linear combination of outgoing result strips~$\bfe^{(t)}_1,\ldots,\bfe^{(t)}_K$ specified by~$G_\cL$.
Instead, both of the coded~$\coded{\bfe}^{(t)}_{i}=\bfF(\alpha_i)$ and uncoded~${\bfe}^{(t)}_{i}=\bfF(\omega_i)$ are evaluation of polynomial~$\bfF(z)$ at different points (see Equation~\eqref{eq:codedResult} and~\eqref{eq:decodeResult} for details).

Nodes further obtain the~\emph{indicator vector}~$g\in\{0,1\}^{QK}$ by exchanging~$\coded{\bfe}^{(t)}_{i}$; the details are given in Section~\ref{subsection:maintainingValidity}. This data is crucial for the next section, namely~\emph{coded appending}, as each of its entries is associated with a coded transaction in every coded incoming strip. 
Specifically, note that every coded transaction is a linear combination of~$K$ transactions; the corresponding entry of the indicator vector~$g$ equals to~$0$ if they are all valid. 
Otherwise, if invalid transactions are included, the entry equals to~$1$.


\subsection{Coded Appending}\label{subsection:codedAppending}

The appending operation of node~$i$ is instructed by the indicator vector~$g$. 
Following the coded verification, each node~$i$ appends the coded \emph{incoming} strip~$\coded{\bfv}_i^{(t)}$ to their coded shard, after setting to zero the parts of it which failed the verification process. That is, node~$i$ zeros out the transactions whose corresponding entry of $g$ equals to~$1$.

\begin{remark}
This process of setting to zero the parts which fail verification has an unexpected implication---it invalidates valid transactions that were linearly combined with invalid ones. We define the \emph{Collateral Invalidation} (CI) rate as the number of transactions that are abandoned due to one invalid transaction, normalized by the total number of transactions processed in one epoch. Polyshard~\cite{Polyshard} has an CI rate of~$\frac{1}{K}$, while our scheme has an CI rate of~$\frac{1}{KQ}$, which is~$Q$ times smaller (i.e., better) than Polyshard. 
\end{remark}

\section{Coded Consensus}\label{section:communication}
In this section, we discuss the consensus aspect of our design. 
Due to its coded nature, we propose three conditions that define~\emph{coded consensus}. 
Later, we show mechanisms that maintain these conditions. 
We use~$f$ to denote the number of Byzantine nodes, and define a~\emph{quorum} as a set of~$N-f$ nodes. 
Our scheme tolerates these~$f$ Byzantine nodes in the partial synchrony
model, given that~$N\ge (K-1)d+3f+1$; \new{a discussion on the nature of this assumption is given in the following section. 
Note that the communication between nodes is point-to-point, and the communication analysis takes into consideration every bit that is transmitted through the system.}

First, our design must guarantee~\emph{consistency}, i.e, at every epoch~$t$, correct nodes must perform coded verification on coded outgoing strips generated from the same block~$\bfB^{(t)}$. 
Formally, we propose the following condition.

\begin{condition}[\textbf{Consistency}]\label{condition:consistency}
Every correct node~$i$ obtains~$\coded{\bfh}_i$, defined as
$\coded{\bfh}_i^{(t)}= (G_\cL)_i^\intercal\cdot \bfB^{(t)}$, where~$(G_\cL)_i$ denotes the~$i$-th column of the generator matrix~$G_\cL$.
\end{condition}

This condition imposes that correct nodes obtain coded outgoing strips that are \emph{consistent} with each other, i.e., correspond to the same block~$\bfB^{(t)}$ defined in Equation~\eqref{eq:block}.
Otherwise, correct verification is impossible, as suggested in~\cite{Discrepancy}.
Moreover, our design must maintain~\emph{homology}.
\begin{condition}[\textbf{Homology}]\label{condition:homology}
Every correct node~$i$ obtains~$\coded{\bfv}_i$, defined as
$\coded{\bfv}_i= (G_\cL)_i^\intercal(\bfB^{(t)})^\intercal$, where both~$\bfB^{(t)}$ and~$G_\cL$ are as in Condition~\ref{condition:consistency}.
\end{condition}

The second condition suggests that every node obtains the coded incoming strip that is~\emph{homologous} to the coded outgoing strip, i.e., generated from the same block~$\bfB^{(t)}$.
Otherwise, we say they are~\emph{nonhomologous}; such nonhomology problem can cause a discrepancy between the verified and the appended, i.e., nodes verify valid transactions, but append invalid ones, nullifying the verification efforts.
Satisfying this condition assures the correct appending of incoming strips. Finally, the blockchain must not store invalid transactions, which gives rise to the last condition.

\begin{condition}[\textbf{Validity}]\label{condition:validity}
Every correct node~$i$ appends the coded incoming strip~$\coded{\bfv}_i^{(t)}$ to its local coded chain after setting the invalid coded transactions to zero, i.e, coded transactions which were not formed exclusively from valid transactions.
\end{condition}
This condition requires every node~$i$ to obtain the indicator vector~$g$ defined in Section~\ref{subsection:codedAppending}.
Together, we say that a protocol provides~\emph{coded consensus} if it simultaneously achieves Condition~\ref{condition:consistency}, Condition~\ref{condition:homology} and Condition~\ref{condition:validity}, i.e., maintains consistency, homology, and validity at the same time. We propose such a protocol, employing a leader to distribute coded strips and provide coded consensus. Our approach adopts HotStuff~\cite{Hotstuff}, a BFT SMR with linear message complexity, and techniques from Information Dispersal for consistency and homology. In addition, we employ coded computation that maintains validity of the system.
Further, the superscript~$(t)$ is omitted for clarity in the rest of this paper.

\subsection{Overview}

\new{In order to maintain the aforementioned three properties, we employ HotStuff to maintain a chain of~\emph{headers}, each corresponds to a block.
HotStuff provides the safety and liveness property of the header chain. 
Together with information dispersal techniques, our scheme maintains the consistency property. 
We provide detailed discussion in Section~\ref{subsection:maintainingConsistency}.
Further, we incorporate extra mechanisms in HotStuff to maintain homology (Section~\ref{subsection:maintainingHomology}) and validity (Section~\ref{subsection:maintainingValidity}).
In Section~\ref{section:discussion}, we show that our scheme indeed provides coded consensus, and inherits the liveness property from HotStuff.
}

\begin{figure}
    \centering

\tikzset{every picture/.style={line width=0.75pt}} 

\begin{tikzpicture}[x=0.75pt,y=0.75pt,yscale=-1,xscale=1]

\draw   (261.4,71) -- (329.4,71) -- (329.4,102.07) -- (261.4,102.07) -- cycle ;
\draw   (381.73,71.67) -- (450.07,71.67) -- (450.07,102.73) -- (381.73,102.73) -- cycle ;
\draw    (381.4,87.27) -- (335.4,87.27) ;
\draw [shift={(333.4,87.27)}, rotate = 360] [color={rgb, 255:red, 0; green, 0; blue, 0 }  ][line width=0.75]    (10.93,-3.29) .. controls (6.95,-1.4) and (3.31,-0.3) .. (0,0) .. controls (3.31,0.3) and (6.95,1.4) .. (10.93,3.29)   ;
\draw   (501.07,71.67) -- (569.4,71.67) -- (569.4,102.73) -- (501.07,102.73) -- cycle ;
\draw    (500.73,87.27) -- (454.73,87.27) ;
\draw [shift={(452.73,87.27)}, rotate = 360] [color={rgb, 255:red, 0; green, 0; blue, 0 }  ][line width=0.75]    (10.93,-3.29) .. controls (6.95,-1.4) and (3.31,-0.3) .. (0,0) .. controls (3.31,0.3) and (6.95,1.4) .. (10.93,3.29)   ;
\draw   (621.07,71) -- (689.4,71) -- (689.4,102.07) -- (621.07,102.07) -- cycle ;
\draw    (620.73,86.6) -- (574.73,86.6) ;
\draw [shift={(572.73,86.6)}, rotate = 360] [color={rgb, 255:red, 0; green, 0; blue, 0 }  ][line width=0.75]    (10.93,-3.29) .. controls (6.95,-1.4) and (3.31,-0.3) .. (0,0) .. controls (3.31,0.3) and (6.95,1.4) .. (10.93,3.29)   ;
\draw   (261.4,131.67) -- (329.4,131.67) -- (329.4,162.73) -- (261.4,162.73) -- cycle ;
\draw   (381.73,132.33) -- (450.07,132.33) -- (450.07,163.4) -- (381.73,163.4) -- cycle ;
\draw   (501.07,132.33) -- (569.4,132.33) -- (569.4,163.4) -- (501.07,163.4) -- cycle ;
\draw   (621.07,131.67) -- (689.4,131.67) -- (689.4,162.73) -- (621.07,162.73) -- cycle ;
\draw    (295.4,131.27) -- (295.4,105.93) ;
\draw [shift={(295.4,103.93)}, rotate = 90] [color={rgb, 255:red, 0; green, 0; blue, 0 }  ][line width=0.75]    (10.93,-3.29) .. controls (6.95,-1.4) and (3.31,-0.3) .. (0,0) .. controls (3.31,0.3) and (6.95,1.4) .. (10.93,3.29)   ;
\draw    (416.07,131.93) -- (416.07,106.6) ;
\draw [shift={(416.07,104.6)}, rotate = 90] [color={rgb, 255:red, 0; green, 0; blue, 0 }  ][line width=0.75]    (10.93,-3.29) .. controls (6.95,-1.4) and (3.31,-0.3) .. (0,0) .. controls (3.31,0.3) and (6.95,1.4) .. (10.93,3.29)   ;
\draw    (536.07,132.6) -- (536.07,107.27) ;
\draw [shift={(536.07,105.27)}, rotate = 90] [color={rgb, 255:red, 0; green, 0; blue, 0 }  ][line width=0.75]    (10.93,-3.29) .. controls (6.95,-1.4) and (3.31,-0.3) .. (0,0) .. controls (3.31,0.3) and (6.95,1.4) .. (10.93,3.29)   ;
\draw    (655.4,131.27) -- (655.4,105.93) ;
\draw [shift={(655.4,103.93)}, rotate = 90] [color={rgb, 255:red, 0; green, 0; blue, 0 }  ][line width=0.75]    (10.93,-3.29) .. controls (6.95,-1.4) and (3.31,-0.3) .. (0,0) .. controls (3.31,0.3) and (6.95,1.4) .. (10.93,3.29)   ;

\draw (269.07,77.67) node [anchor=north west][inner sep=0.75pt]  [font=\large] [align=left] {\header};
\draw (282.4,136.33) node [anchor=north west][inner sep=0.75pt]   [align=left] {$\coded{\bfv}_i^{t-2}$};
\draw (402.4,137.67) node [anchor=north west][inner sep=0.75pt]   [align=left] {$\coded{\bfv}_i^{t-1}$};
\draw (522.4,137.67) node [anchor=north west][inner sep=0.75pt]   [align=left] {$\coded{\bfv}_i^{t}$};
\draw (643.07,137) node [anchor=north west][inner sep=0.75pt]   [align=left] {$\coded{\bfv}_i^{t+1}$};
\draw (194.57,86.67) node  [font=\Large] [align=left] {\begin{minipage}[lt]{97.69pt}\setlength\topsep{0pt}
Header Chain
\end{minipage}};
\draw (389.73,78.33) node [anchor=north west][inner sep=0.75pt]  [font=\large] [align=left] {\header};
\draw (509.07,77.67) node [anchor=north west][inner sep=0.75pt]  [font=\large] [align=left] {\header};
\draw (629.07,77.67) node [anchor=north west][inner sep=0.75pt]  [font=\large] [align=left] {\header};
\draw (200.07,148) node  [font=\Large] [align=left] {\begin{minipage}[lt]{97.92pt}\setlength\topsep{0pt}
Coded Shard
\end{minipage}};

\end{tikzpicture}

\caption{Illustration of the internal storage of node~$i$. We use HotStuff for nodes to reach a consensus on a chain of headers. Meanwhile, each node stores a distinct chain of coded incoming strips. i.e., the coded shard. The consensus on the chain of headers assures that at any height~$t$ (i.e., epoch) of the chain, nodes store coded outgoing strips that are consistent with each other, i.e., generated from the same block~$\bfB$}.
    \label{fig:Fig.2}
\end{figure}
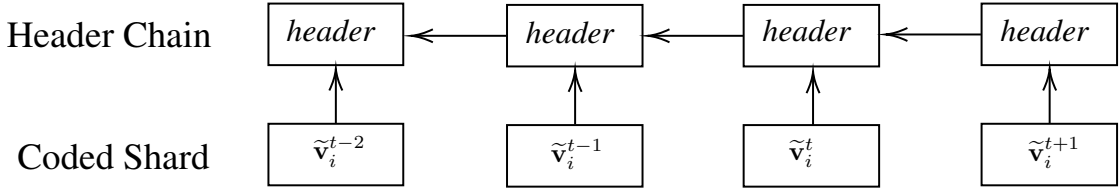

\subsection{Maintaining Consistency (Condition~\ref{condition:consistency})}\label{subsection:maintainingConsistency}

We first address the consistency problem of coded outgoing strips generated from~$\bfB$ (the case for coded incoming strips are similar). 
Our method depends on a data structure called~\emph{checksum}, introduced by AVID-FP~\cite{AVID-FP}. 
A checksum allows nodes to verify that the received coded strip is consistent with ones received by others, i.e., computed from the same block.
It contains a list of~$K$ \emph{fingerprints}. 
Each fingerprint is generated from an (uncoded) outgoing strip (a row of~$\bfB$), using some~$\varepsilon$-fingerprinting function~$\Var{fp}$ defined as follows.

\begin{definition}
\cite[Definition~2.1]{AVID-FP} A function~$\Var{fp}:T\times \bbF_q^\delta \shortrightarrow  \bbF_q^\gamma$ is an~$\varepsilon$-fingerprinting function if
$$
\max_{d,d'\in \bbF_q^\delta,  d\neq d'} \Pr_{r \sim Unif(T)}
[\Var{fp}(r,d)=\Var{fp}(r,d')]\leq \varepsilon.
$$
That is, the probability for two distinct~$d,d'\in\bbF_q^\delta$ to have the same fingerprint is at most~$\varepsilon$, where the key~$r$ is chosen uniformly at random from some input space~$T$. 
\end{definition}

Examples of $\varepsilon$-fingerprinting functions include \emph{division fingerprinting}, which generalizes Rabin's fingerprinting~\cite{Rabin's Fingerprinting} from~$\bbF_2$ to any field~$\bbF_q$.  With coefficients in~$\bbF_q$, the input~$d\in\bbF_q^\delta$ is regarded as a polynomial~$d(x)$ of degree~$\delta$, and~$T$ is a collection of monic irreducible polynomials of degree~$\gamma$. The division fingerprinting function returns the remainder of~$d(x)$ divided by~$p(x)$, i.e.,~$d(x)\text{ mod } p(x)$, where~$p(x)$ is chosen from~$T$ uniformly at random.

We let~$\Var{fp}:T\times \bbF_q^\delta \shortrightarrow  \bbF_q^\gamma$ be an~$\varepsilon$-fingerprinting function where~$\delta=\frac{|\bfB|}{K}$ is the size of a strip. 
As done in AVID-FP~\cite{AVID-FP}, the random selection of~$r$ from~$T$ is simulated by deterministic cryptographic hash functions~\cite{Random Selection}; 
referring to the use of a hash function as a random oracle 
is a common practice in blockchain systems, e.g., in~\cite{Elastico}. 
In addition to the~$K$ fingerprints, a list of hash values~$cc=[\Var{hash}(\coded{\bfh}_1),\ldots,\Var{hash}(\coded{\bfh}_N)]$ is included in the checksum, generated using a cryptographic hash function~$hash:\bbF_q^*\shortrightarrow \bbF_q^\lambda$ (not to be confused with~$\Var{hash1}$ and~$\Var{hash2}$ mentioned earlier). 
The selection of~$r$ is achieved by another cryptographic hash function~$\Var{select}: (\bbF_q^\lambda)^N\shortrightarrow  T$, which takes the list~$cc$ as input and outputs an element in~$T$\footnote{We point out that Verifiable Random Function (VRF)~\cite{VRF} is an alternative implementation of the random oracle, which is communication-efficient as it does not require the checksum to contain~$N$ hash values. 
It has been employed in blockchain designs including~\cite{Algorand} and~\cite{Omniledger}, but for different purposes.}. 
Formally, the function~\FUNC{Checksum} in Algorithm~\ref{alg:utilities} encapsulates the construction of the checksum. 

The leader first generates coded outgoing strips~$(G_\cL)_1^\intercal\cdot\bfB,\ldots,(G_\cL)_N^\intercal\cdot\bfB$ and constructs the checksum~$\Var{cksH}$. 
Similarly, it creates coded incoming strips~$(G_\cL)_1^\intercal\cdot\bfB^\intercal,\ldots,(G_\cL)_N^\intercal\cdot\bfB^\intercal$ and~$\Var{cksV}$.
The leader then sends the checksums to every node~$i$ piggybacked with the coded fragments~$\coded{\bfh}_i=(G_\cL)_i^\intercal\cdot\bfB$ and~$\coded{\bfv}_i=(G_\cL)_i^\intercal\cdot\bfB^\intercal$.
In order to verify that a coded strip \emph{agrees} with the received checksum, i.e.,~$\Var{cksH}$ and~$\coded{\bfh}_i$ are computed from the same~$\bfB$, and~$\Var{cksV}$ and~$\coded{\bfv}_i$ are computed from the same~$\bfB^\intercal$, we require the fingerprinting function to be homomorphic. 

\begin{definition}
\cite[Definition~2.5]{AVID-FP} A fingerprinting function~$\Var{fp}:T\times \bbF_q^\delta \shortrightarrow  \bbF_q^\gamma$ is \textit{homomorphic} if~$\Var{fp}(r,d)+\Var{fp}(r,d') = \Var{fp}(r,d +d')$ and~$b\cdot \Var{fp}(r,d)=\Var{fp}(r,b\cdot d)$ for any~$r\in T$, any~$b\in \bbF_q$, and any~$d,d'\in \bbF_q^\delta$. 
\end{definition}

This property enables the node to verify that the coded fragment satisfies the required linear combination (defined by the generator matrix) with uncoded strips, by having access only to the fingerprints of uncoded strips, and not to the uncoded strips.
As any coded strip is a linear combination of~$K$ uncoded strips, the homomorphism guarantees that its fingerprint must be equal to the same linear combination of~$K$ fingerprints of uncoded strips, i.e.,
\begin{equation*}
    (G_\cL)^\intercal \begin{bmatrix} \Var{fp}(r,{\bfh}_{1})\\\vdots\\\Var{fp}(r,{\bfh}_{K})\end{bmatrix} = 
    \begin{bmatrix} \Var{fp}(r,\coded{\bfh}_{1})\\\vdots\\\Var{fp}(r,\coded{\bfh}_{N})\end{bmatrix},~\mbox{and~}
    (G_\cL)^\intercal \begin{bmatrix} \Var{fp}(r,{\bfv}_{1})\\\vdots\\\Var{fp}(r,{\bfv}_{K})\end{bmatrix} = 
    \begin{bmatrix} \Var{fp}(r,\coded{\bfv}_{1})\\\vdots\\\Var{fp}(r,\coded{\bfv}_{N})\end{bmatrix}.
\end{equation*}
Note that~${\bfh}_{k}$ is the~$k$-th row of the matrix~$\bfB$, and~${\coded{\bfh}}_{i}$ is the~$i$-th row of the matrix~$\coded{\bfB}=(G_\cL)_i^\intercal\bfB$.
Similarly,~${\bfv}_{k}$ is the~$k$-th row of the matrix~$\bfB^\intercal$, and~${\coded{\bfv}}_{i}$ is the~$i$-th row of the matrix~$\coded{\bfB^\intercal}=(G_\cL)_i^\intercal\bfB^\intercal$.
Thus, each node can confirm the agreement between the received (coded) strip and the checksum, as long as the fingerprints of the coded strip match the encoding of the~$K$ fingerprints in the checksum; this is guaranteed with high probability, as shown~\cite[Theorem~3.4]{AVID-FP}. 

In this respect, each node can assure that the received coded strip is consistent with ones received by others by~\emph{reaching a consensus on the checksum} which is in agreement with all coded strips. 
Treating checksums as requests, we can employ BFT SMR protocols that allow correct nodes to reach a consensus on the total order of them, and hence maintain the consistency of strips at any epoch~$t$. 
Specifically, we adopt HotStuff~\cite{Hotstuff}, a leader-based BFT SMR protocol that works in partial synchrony (see Section~\ref{subsection:smr}).
We define a~\emph{header} of a block~$\bfB$ as a concatenation of checksums computed from the matrix~$\bfB$ and its transpose~$\bfB^\intercal$, i.e.,
$$
\Var{header} = (\Var{cksH},\Var{cksV}).
$$
\new{HotStuff allows nodes to reach a consensus on a chain of header.
HotStuff always ensure~\emph{safety} given bounded number of faulty nodes ($N\geq 3f+1$). 
That is, no two correct node should accept conflicting headers; by conflicting we mean the chain led by neither one extends the chain led by the other. 
Hence, correct nodes will never accept different headers at any epoch~$t$.
When the system becomes synchronous, HotStuff provides the~\emph{liveness} property, such that the consensus on headers will be reached when the leader is correct. 
As discussed earlier, such a consensus on a header guarantees the consistency of coded fragment generated from the corresponding block.}

\new{
For clarity, the lines in Algorithm~\ref{alg:cc1} and Algorithm~\ref{alg:cc2} are color coded. 
The pseudocode of HotStuff is provided in~\emph{black}.
The \blue{blue} lines concern the distribution and verification of coded strips, maintaining validity.
The \orange{orange} lines and \green{green} lines maintain consistency and homology, respectively. 
We will elaborate the colored lines in sequel.
In particular, we argue that our add-ons do not affect the safety and liveness property of HotStuff algorithm.
}

\begin{spacing}{1}
\begin{algorithm}

\caption{Utilities}
\label{alg:utilities}
\begin{multicols}{2}
\begin{algorithmic}[1]

\Function{\FUNC{Msg}}{$\type,\header,\qc,\payload$}
\State $\itm.\type=\type$
\State $\itm.\viewNumber=\curView$
\State $\itm.\header=\header$
\State $\itm.\qc=\qc$
\State $\itm.\payload=\payload$
\State \Return \itm
\EndFunction

\Function{\FUNC{Header}}{$\prev,\checksums$}
\State $\header.\prev=\prev$
\State $\header.\checksums=\checksums$
\EndFunction
\Statex \CommentX{instantiate QC from a set of messages}
\Function{\FUNC{QC}}{$\cM$}
\State $\qc.\type\shortleftarrow \itm.\type:\itm\in\cM$
\State $\qc.\viewNumber\shortleftarrow \itm.\viewNumber:\itm\in\cM$
\State $\qc.\header\shortleftarrow \itm.\header:\itm\in\cM$
\State $\qc.\signature\leftarrow \Var{tcombine}(\langle \qc.\type,\qc.\viewNumber,$
\Statex $\hfill\{\qc.\header\rangle,\itm.\partialSig\mid\itm\in\cM\})$
\EndFunction

\Function{\FUNC{matchingMsg}}{$\itm,\itt,\itv$}
\State \Return $(\itm.\type=\itt)\wedge (\itm.\viewNumber=\itv)$
\EndFunction

\Function{\FUNC{matchingQC}}{$\qc,\itt,\itv$}
\State \Return $(\qc.\type=\itt)\wedge (\qc.\viewNumber=\itv)\wedge$
\Statex $\Var{tverify}(\sig{\qc.\type,\qc.\viewNumber,\qc.\header},\qc.\signature)$
\EndFunction

\Function{\FUNC{Encode}}{$G,\bfm$}\Comment{$\bfm$:length-$K$ vector}
\State \Return $\bfm\cdot G$
\EndFunction

\Function{\FUNC{EncodeRow}}{$G,\bfM$}
\State \Return $ G^\intercal \cdot \bfM$
\EndFunction

\Function{\FUNC{Checksum}}{$G,\bfM$}
\State $\coded{\bfM} \shortleftarrow \FUNC{EncodeRow}(G,\bfM)$ \Comment{$\coded{\bfM}$:~$N\times K$ matrix}
\For{$k=1$ to~$N$}
\State  $\cks.\Var{CC}[i]\shortleftarrow\Var{hash}(\coded{\bfm}_{i.*})$\Comment{ the~$i$-th row of~$\coded{\bfM}$}
\EndFor
\State $\itr~\shortleftarrow \Var{select}(\cks.\Var{CC})$

\For{$k=1$ to~$K$}
\State $\Var{cks.FP}[k] \shortleftarrow \Var{fp}(r, \bfm_{k.*})$\Comment{ the~$k$-th row of~$\bfM$}
\EndFor
\State \Return $\cks,\coded{\bfM}$
\EndFunction
\Statex \CommentX{check if checksum agrees with coded fragment}
\Function{\FUNC{agree}}{$\Var{checksum},\Var{fragment},i$}
\State $\ith \shortleftarrow \Var{hash}(\fragment)$~$\itf \shortleftarrow \Var{fp}(\Var{select}(\cks.\Var{CC}), \fragment)$
\State \Return{$(\ith=\cks.\Var{CC}[i])\wedge(\itf=\FUNC{Encode}(G_\cL,\cks.\Var{FP})[i]$)}

\EndFunction

\Function{\FUNC{safeHeader}}{$\header,\qc$}\label{func:safeHeader2}
\State \Return $(\header~\mbox{extends from~}\lockedQC.\header)\vee$
\State \hfill$(\qc.\viewNumber>\lockedQC.\viewNumber) $\label{line:safetyAndLiveness}
\EndFunction




\Function{\FUNC{signEach}}{$\bfm,i$}\Comment{$\bfm$: length-$N$vector}
\For{$j=1$ to~$N$} $\Var{result}[j]\shortleftarrow \sig{\bfm[j]}_{\sigma_i}$\EndFor
\State \Return \Var{result}
\EndFunction

\end{algorithmic}
\end{multicols}
\end{algorithm}
\end{spacing}

\begin{spacing}{1}
\begin{algorithm}
\caption{Coded Consensus Part 1}\label{alg:cc1}
\begin{algorithmic}[1]
\Statex \red{$\triangleright~\FUNC{prepare phase}$}
\As{a leader}\Comment{$i=\FUNC{Leader}(curView)$}

    \State $\bfB\shortleftarrow \text{collect a block of transactions}$\label{line:block}
    \State $\orange{(\Var{cksH},\coded{\bfB})\shortleftarrow\FUNC{Checksum}(G_\cL,\bfB),(\Var{cksV},\coded{\bfB^\intercal})\shortleftarrow \FUNC{Checksum}(G_\cL,\bfB^\intercal)}$
    
    \Upon{$N-f$ \FUNC{new-view} messages:} $\cM\shortleftarrow \{\itm\mid \FUNC{matchingMsg}(\itm, {\FUNC{new-view}},\Var{curView}-1)\}$    \EndUpon    
    
    \State $\highQC\shortleftarrow( \underset{m\in\cM}{\arg\max}\{\itm.\qc.\viewNumber\}).\qc$ \Comment{QC with highest view number }
    \State $\header\shortleftarrow\FUNC{Header}(\Var{highQC.header}, [\Var{cksH},\Var{cksV}])$
    
    \For{$i=1$ to~$N$}
    \State \blue{$\payload\shortleftarrow[\coded{\bfh}_{i},\coded{\bfv}_{i}]$} \Comment{the~$i$-th rows of~$\coded{\bfB}$ and~$\coded{\bfB^\intercal}$}\label{line:payloadStrips}  
    \State send~$\FUNC{Msg}(\FUNC{prepare},\header,\highQC,\payload)$~to node~$i$
    \EndFor
    
\EndAs

\As{node~$i$}

    \Upon{message~\itm~from \FUNC{Leader}(\curView):}~$\itm\shortleftarrow\FUNC{matchingMsg}(\itm,\FUNC{prepare},\Var{curView})$    \EndUpon
    
    \If{$(\itm.header~\mbox{extends from~}\itm.qc.header)\wedge(\FUNC{safeHeader}(\itm.\header,\header.\qc))$}
    \If{$\orange{\neg(\FUNC{Agree}(\header.\checksum[1],\itm.\payload[1],i)}\wedge\orange{\FUNC{Agree}(\header.\checksum[2],\itm.\payload[2],i)}$}\label{line:consistencyPredicate}
    \State \blue{$\Var{codedResults}\shortleftarrow F(\coded{\bfV}_i,\coded{\bfh}_i)$}\label{line:codedResults}
    
        \State \green{$w_{i,*}\shortleftarrow\FUNC{Encode}(G_\cL,\Var{m.payload}[1])$},~\green{$\px_i\shortleftarrow \FUNC{signEach}(w_{i,*})$}\label{line:signatureVectorX}
        \State \green{$u_{i,*}\shortleftarrow\FUNC{Encode}(G_\cL,\Var{m.payload}[2])$}\label{line:signatureVectorY}        
        \State \blue{$\textit{results}\shortleftarrow\FUNC{Encode}(G_\cL,codedResults)$},~\blue{$\textit{sigResults}\shortleftarrow\FUNC{signEach}(\FUNC{Encode}(G_\cL,codedResults))$}\label{line:encodeResults}\label{line:signResults}
    \State $\Var{ack}\shortleftarrow\FUNC{Msg}(\FUNC{prepare},\itm.\header,\bot,\payload\shortleftarrow(\green{\px_i},\blue{\textit{results}}, \blue{\textit{sigResults}})$
    \State $\ack.\partialSig\shortleftarrow \sig{\FUNC{prepare},\curView,\itm.\header}_{\pi,i}$\label{line:ackPrep}
    \State send \ack~to \FUNC{Leader}(curView)\Comment{acknowledge the \FUNC{prepare} message}\label{line:ackPrepEnd}
    \EndIf
    \EndIf
\EndAs

\Statex \red{$\triangleright~\FUNC{pre-commit phase}$}
\As{a leader}
    \Upon{\ack's~on~\FUNC{prepare} from a quorum~$\cI$:}~$\cA\shortleftarrow  \{\ack\mid \FUNC{matchingMsg}(\ack,  {\FUNC{prepare}},\Var{curView})\}$    \EndUpon    
    \State $\prepareQC\shortleftarrow QC(\cA)$    

    \For{$i=1$ to~$N$}
    \State \green{$\payload[1]\shortleftarrow\FUNC{Column}(\{\ack.\payload[1]\mid\ack\in\cA\},i)$}\label{line:sendColumnX}
    \State \blue{$\payload[3]\shortleftarrow\FUNC{Column}(\{\ack.\payload[3]\mid \ack\in\cA\},i)$}\label{line:sendResultColumn}
    \State \blue{$\payload[4]\shortleftarrow\FUNC{Column}(\{\ack.\payload[4]\mid \ack\in\cA\},i)$}\label{line:sendSigResultColumn}
    \State $\itm\shortleftarrow\FUNC{Msg}(\FUNC{pre-commit},\bot,\prepareQC,\payload),~\green{\itm.\textit{quorumIdentifier}\shortleftarrow\FUNC{QI}(\cI)}$\label{line:sendQI}
    \State send~\itm~to node~$i$
    \EndFor
\EndAs

\As{node~$i$}
    \Upon{message~\itm~from \FUNC{Leader}(\curView)}:~$\itm\shortleftarrow\FUNC{matchingQC}(\itm.qc,\FUNC{pre-commit},\Var{curView})$\EndUpon
    \State {$\prepareQC\shortleftarrow\itm.\qc$}
    \If{$\green{\FUNC{verifySig}(\itm.\payload[1]),u_{i,*})}\wedge\blue{\FUNC{verifySig}(\itm.\payload[2], \itm.\payload[3])}$}   \label{line:verifySig}

    \State \blue{$\Var{decoded}\shortleftarrow \FUNC{Decode}(\itm.\payload[3])$\label{line:decodeTinyBlock}}
    \State \blue{$\Var{binaryResults}\shortleftarrow\FUNC{Binary}(\Var{decoded})$}\label{line:binaryResults}
    \State \blue{$\payload\shortleftarrow\FUNC{partialIndicator}(\Var{binaryResults})$}\label{line:partialIndicator}
     
    \State $\Var{ack}\shortleftarrow\FUNC{Msg}(\FUNC{pre-commit},\itm.\qc.\header,\bot,\payload)$
    \State $\ack.\partialSig\shortleftarrow \sig{ \FUNC{pre-commit},\curView,\itm.\qc.\header}_{\pi,i},~\green{\ack.\Var{partialSigQI}\shortleftarrow\sig{\Var{QI}}_{\pi,i} }$\label{line:partialSigOnQI}

    \State send \ack~to \FUNC{Leader}(curView)  \Comment{acknowledge the \FUNC{pre-commit} message}   
    
    \EndIf
    
\EndAs

\end{algorithmic}
\end{algorithm}
\end{spacing}

HotStuff runs in consecutive~\emph{views} associated with increasing integer view numbers. In each view there is a designated node~\FUNC{Leader}(\viewNumber) that proposes new headers and distribute coded strips.
In order to append a header to the chain, the leader must collect partial signatures on its proposal from a quorum of nodes in each of three phases, namely~\FUNC{prepare}, \FUNC{pre-commit} and~\FUNC{commit}.
The partial signatures are generated with a~$(N-f,N)$-threshold signature scheme~$\pi$.

A new leader must collect~\FUNC{new-view} messages from a quorum of nodes; a correct node sends the~\FUNC{new-view} message, alongside a valid~$\prepareQC$ (define next) with highest view that it has received, to the leader of the next view if it believes the current one fails (line~\ref{line:newView}, Algorithm~\ref{alg:cc2}). 
The new leader chooses the one, called~$\highQC$, with the highest view number within all received~$\prepareQC$s.
It creates and extends the new header (i.e., containing the hash value of another header) from the header contained in~$\highQC$.
If the leader is an incumbent one, it extends the header from its last proposed header.

Incumbent or not, the leader sends the newly created header to every node~$i$ piggybacked with the corresponding coded strips~$\coded{\bfh}_i$ and~$\coded{\bfv}_i$~(\ralg{\ref{line:payloadStrips}}{\ref{alg:cc1}}) generated from the block~$\bfB$ (line~\ref{line:block}, Algorithm~\ref{alg:cc1}). 
Upon receiving the~\FUNC{prepare} message from the leader, the node~$i$ runs the function~\FUNC{safeHeader} in Algorithm~\ref{alg:utilities} which compares the newly received header with the header it has locked on (i.e. the header contained in the~$\precommitQC$ with highest view number it has received, called~$\lockedQC$, which will be defined next).
The new header is considered valid if it extends from the locked header, or extends from a header in a~$\prepareQC$ with a higher view number than the~$\lockedQC$ (line \ref{line:safetyAndLiveness}, Algorithm~\ref{alg:utilities}). 
Such a check guarantees both safety~\cite[Theorem~2]{Hotstuff} and liveness~\cite[Theorem~4]{Hotstuff}.

Further, a valid checksum in the header must agree with the coded strip, and we implement the checking process in the function~\FUNC{agree} in Algorithm~\ref{alg:utilities}. 
In particular, we add another predicate in line~\ref{line:consistencyPredicate} of Algorithm~\ref{alg:cc1} to verify such agreements.

If the aforementioned two predicates both return true, node~$i$ responds with a partial signature~$\sig{\FUNC{prepare},\curView,\header}_{\pi,i}$ acknowledging the header from the leader of the current view (line~\ref{line:ackPrep}, Algorithm~\ref{alg:cc1}).
The leader then enters the~\FUNC{pre-commit} phase, and instantiates a $\prepareQC$ (where \Var{QC} stands for~\emph{quorum certificate}) from the replies using the constructor function~$\FUNC{QC}$ in Algorithm~\ref{alg:utilities}. The~$\prepareQC$ contains a valid signature~$\sig{\FUNC{prepare},\curView,\header}_{\pi}$, showing a quorum of~$N-f$ nodes has acknowledged the~\FUNC{prepare} message from the current leader.

The leader broadcasts~$\prepareQC$. Every node verifies the signature~$\sig{\FUNC{prepare},\curView,\header}_{\pi}$ using the function~\FUNC{matchingQC} in Algortithm~\ref{alg:utilities}.
After that, node~$i$ replies with a partial signature $\sig{\FUNC{pre-commit},\curView,\header}_{\pi,i}$. 
From the replies the leader creates a~$\precommitQC$, and broadcast it in the~\FUNC{commit} phase. Similarly, from the replies the leader creates~$\commitQC$; nodes only link the coded outgoing strip~$\coded{\bfv}_i$ to the local chain after receiving the~$\commitQC$.

\begin{spacing}{1}
\begin{algorithm}
\caption{Coded Consensus Part 2}\label{alg:cc2}
\begin{algorithmic}[1]

\Statex \red{$\triangleright~\FUNC{commit phase}$}
\As{a leader}
    \Upon{$(N-f)$ \ack's~on~\FUNC{pre-commit}:}~$\cA\shortleftarrow \{\ack\mid \FUNC{matchingMsg}(\ack,  {\FUNC{pre-commit}},\Var{curView})\}$    \EndUpon
    \State \blue{$\payload\shortleftarrow\FUNC{mergeIndicators}(\cA)$}\label{line:mergeIndicators}
   \State $\precommitQC\shortleftarrow QC(\cA)$
    \State $\itm\shortleftarrow\FUNC{Msg}(\FUNC{commit},\bot,precommitQC,\payload)$
        \State \green{$\itm.\Var{signatureQI}\shortleftarrow\Var{tcombine}(\sig{QI},\{\ack.\Var{partialSigQI}\mid \ack\in\cA\})$}\label{line:combineSigQI}
    \State broadcast \itm\label{line:broadcastWithSigQI}
\EndAs

\As{node~$i$}
    \Upon{message~\itm~from~\FUNC{Leader}(\curView)}:~$\itm\shortleftarrow\FUNC{matchingQC}(\itm,\FUNC{commit},\Var{curView})$  \EndUpon
    \If{\green{$\Var{tverify}(\sig{QI(\cI)},\itm.\Var{signatureQI})$}}\label{line:QIPredicate}
        \State $\lockedQC\shortleftarrow\itm.\qc$
        \State \blue{$\coded{\bfv}_i\shortleftarrow \FUNC{update}(\coded{\bfv}_i,\itm.\payload)$\Comment{update coded incoming strip using~$g$}}\label{line:updateStrip}
        \State $\Var{ack}\shortleftarrow\FUNC{Msg}(\FUNC{commit},\itm.\qc.\header,\bot,\bot),\ack.\partialSig\shortleftarrow \sig{ \FUNC{commit},\curView,\itm.\qc.\header}_{\pi,i}$
        \State send \ack~to \FUNC{Leader}(curView) \Comment{acknowledge the \FUNC{pre-commit} message}      
    \EndIf
\EndAs

\Statex \red{$\triangleright~\FUNC{decide phase}$}
\As{a leader}
    \Upon{$(N-f)$ \ack's~on~\FUNC{commit}:} $\cA\shortleftarrow \{\ack\mid \FUNC{matchingMsg}(\ack,  {\FUNC{commit}},\Var{curView})\}$    \EndUpon    
    \State $\commitQC\shortleftarrow QC(\cA)$

    \State broadcast $\FUNC{Msg}(\FUNC{decide},\bot,commitQC,\bot)$
\EndAs

\As{node~$i$}
    \Upon{message~\itm~from~\FUNC{Leader}(\curView)}:~$\itm\shortleftarrow\FUNC{matchingQC}(\itm,\FUNC{commit},\Var{curView})$\EndUpon
    \State append~$\coded{\bfv}_i$ to local chain

\EndAs

\Statex \red{$\triangleright~\FUNC{NextView Interrupt}$}
\State send~$\FUNC{Msg}(\FUNC{new-view},\bot,\prepareQC,\bot)$ to~$\FUNC{Leader}(\curView+1)$~\label{line:newView}

\end{algorithmic}
\end{algorithm}
\end{spacing}

\subsection{Maintaining Homology (Condition~\ref{condition:homology})}\label{subsection:maintainingHomology}

Although~\cite{Hotstuff} allows nodes to reach a consensus on the chain of headers, which guarantees the consistency property of strips, the homology problem remains. 
With a Byzantine leader, even though a correct node~$i$ may obtain a consistent coded outgoing strip~$\coded{\bfh}_i=(G_\cL)_i^\intercal\bfB$ and consistent coded incoming strip~$\coded{\bfv}_i=(G_\cL)_i^\intercal(\bfB')^\intercal$, they might correspond to different blocks~$\bfB\neq\bfB'$. 
To solve this problem, we integrate the following design (in \green{green}) with Hotstuff's three-phase protocol to maintain homology.

Upon receiving the coded outgoing strip~$\coded{\bfh}_i=(G_\cL)_i^\intercal\bfB$ from the leader in the~\FUNC{prepare} phase, node~$i$ multiplies it from the right with~$G_\cL$, creating a length-$N$ vector~$w_{i,*}$, which equals to the~$i$-th row of the matrix~$\bfW=G_\cL^\intercal\bfB G_\cL$ (line~\ref{line:signatureVectorX}, Algorithm~\ref{alg:cc1}). 
Similarly, it creates a vector~$u_{i,*}=(G_\cL)_i^\intercal\coded{\bfv}_i$ 
as the~$i$-th row of~$\bfU=(G_\cL)^\intercal(\bfB')^\intercal G_\cL$ (line~\ref{line:signatureVectorY}, Algorithm~\ref{alg:cc1}). 

Node~$i$ then defines a length-$N$ \emph{signature vector}~$\px_i$, whose~$j$-th entry
stores its digital signature (not to be confused with partial signature) on~$\sig{w_{i,j},j}$. 
Formally, we have 
\begin{equation}
        \px_i[j]=\sig{w_{i,j},j}_{\sigma_i},~\text{for }~j\in[N].
\end{equation}

Node~$i$ sends back~$\px_i$ in the acknowledgement of the~\FUNC{prepare} message received from the leader.
For every received~$\px_i$, the leader first verifies if~$\px_i[j]$ is indeed a valid signature on~$w_{i,j}$, for each~$j\in[N]$.
This step is omitted in the pseudocode for clarity, and the leader ignores messages that fail the verification.
After collecting such vectors from nodes in a quorum~$\cI$ of size~$|\cI|=N-f$, the leader stacks the~$\px_i$'s in an~$(N-f)\times N$ matrix ordered by the indices of nodes. 
It sends the~$j$-th column of the resulting matrix to every node~$j\in[N]$ in the~\FUNC{pre-commit} message together with a~\emph{quorum identifier}~$QI(\cI)$ that specifies the members of~$\cI$.\footnote{Since there are~$\binom{N}{f}$ possible quorums,
$\log\binom{N}{f}<\log[\sum_{f=1}^N\binom{N}{f}]=\log 2^N= N$
bits suffice to uniquely present either of them; this is negligible in size compared to the~$N-2f$ digital signatures sent along with it.} (\ralg{\ref{line:sendQI}}{\ref{alg:cc1}}).

Upon receiving the~\FUNC{pre-commit} message from the leader, node~$j$ learns the members of~$\cI$ from the quorum identifier~$QI(\cI)$.
Meanwhile, node~$j$ receives~$\sig{w_{i,j}}_{\sigma_i}$, for every~$i\in\cI$, and verifies if the received~$\sig{w_{i,j},j}_{\sigma_i}$ is a valid signature on~$\sig{u_{j,i},j}$ (line~\ref{line:verifySig}, Algorithm~\ref{alg:cc1}). 
The process is encapsulated in function~\FUNC{verifySig}, whose simple implementation (see above) is omitted for brevity.
If the verification passes, node~$j$ creates partial signature~$\sig{QI(\cI)}_{\pi,i}$ on the quorum identifier and sends it back to the leader as an acknowledgement of the~\FUNC{pre-commit} message (\ralg{\ref{line:partialSigOnQI}}{\ref{alg:cc1}}). 

The leader verifies the received partial signature; this verification is omitted in the pseudocode.
Upon receiving acknowledgements from a quorum~$\cJ$ of nodes, the leader combines partial signatures and broadcasts a~\FUNC{commit} message with a valid signature~$\sig{QI(\cI)}_{\pi}$  (\ralgx{\ref{line:combineSigQI}}{\ref{line:broadcastWithSigQI}}{\ref{alg:cc2}}). 
Nodes can be convinced that~$\bfB=\bfB'$ after verifying~$\sig{QI(\cI)}_{\pi}$ in the~\FUNC{commit} message, due to the following lemma.

\begin{lemma}\label{lemma:validSigQI}
A valid signature~$\langle QI(\cI)\rangle_\pi$ implies~$\bfB=\bfB'$.
\end{lemma}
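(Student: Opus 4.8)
The plan is to read a valid $\langle QI(\cI)\rangle_\pi$ as a certificate that enough correct nodes have cross-checked, entry by entry, the two ``twice-encoded'' matrices $\bfW=G_\cL^\intercal\bfB G_\cL$ and $\bfU=G_\cL^\intercal(\bfB')^\intercal G_\cL$ against one another, and then to promote those scattered pointwise agreements to the identity $\bfB=\bfB'$ by a bivariate interpolation argument.

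First I would unwind the cryptographic guarantee behind the signature. Since $\pi$ is an $(N-f,N)$-threshold scheme, a valid $\langle QI(\cI)\rangle_\pi$ can only be assembled from at least $N-f$ distinct partial signatures $\langle QI(\cI)\rangle_{\pi,j}$ (\ralg{\ref{line:partialSigOnQI}}{\ref{alg:cc1}}); at most $f$ nodes are Byzantine, so at least $N-2f$ of the signers are correct --- call this set $\cJ$ --- and likewise at least $N-2f$ of the $|\cI|=N-f$ members of $\cI$ are correct --- call this set $\cI^\ast$. A correct node $j\in\cJ$ emits its partial signature only after the \FUNC{verifySig} check passes, i.e., only after verifying for every $i\in\cI$ that the value relayed by the leader is a genuine signature by node $i$ on $\langle u_{j,i},j\rangle$, where $u_{j,\ast}$ is the $j$-th row of $\bfU$ recomputed by $j$ from its own $\coded{\bfv}_j$. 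By Condition~\ref{condition:consistency} (the mechanism of Section~\ref{subsection:maintainingConsistency}), a correct $j$ that has reached this point holds $\coded{\bfv}_j=(G_\cL)_j^\intercal(\bfB')^\intercal$, so $u_{j,i}$ is exactly the $(j,i)$ entry of $\bfU$. By unforgeability of digital signatures, for every correct $i\in\cI^\ast$ node $i$ must actually have signed $\langle u_{j,i},j\rangle$; but a correct $i$ signs only $\langle w_{i,j},j\rangle$ (\ralg{\ref{line:signatureVectorX}}{\ref{alg:cc1}}), where $w_{i,j}$ is the $(i,j)$ entry of $\bfW$. Hence $w_{i,j}=u_{j,i}$ for every $(i,j)\in\cI^\ast\times\cJ$ --- at least $(N-2f)^2$ agreements --- each conclusion holding except with negligible probability, consistent with the assumptions of Section~\ref{subsection:cryptoPrimitives}.

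The algebraic core comes next. Using $(G_\cL)_{k,i}=\Phi_k(\alpha_i)$, I would write $w_{i,j}=\mathcal{W}(\alpha_i,\alpha_j)$ and $u_{j,i}=\mathcal{V}(\alpha_i,\alpha_j)$ for the bivariate polynomials $\mathcal{W}(x,y)=\sum_{k,l}\Phi_k(x)\bfB_{k,l}\Phi_l(y)$ and $\mathcal{V}(x,y)=\sum_{k,l}\Phi_k(x)\bfB'_{k,l}\Phi_l(y)$, each of degree at most $K-1$ in each variable, and satisfying $\mathcal{W}(\omega_k,\omega_l)=\bfB_{k,l}$ and $\mathcal{V}(\omega_k,\omega_l)=\bfB'_{k,l}$ because $\Phi_k(\omega_l)=\delta_{k,l}$. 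The parameter bound $N\ge(K-1)d+3f+1$ gives $N-2f\ge K$, so $\mathcal{W}-\mathcal{V}$ vanishes on a grid of at least $K$ distinct $x$-coordinates times at least $K$ distinct $y$-coordinates; fixing any such $y$ makes $(\mathcal{W}-\mathcal{V})(\cdot,y)$ a univariate polynomial of degree $\le K-1$ with $\ge K$ roots, hence the zero polynomial, so each coefficient of $x^s$ in $\mathcal{W}-\mathcal{V}$ is itself a polynomial of degree $\le K-1$ in $y$ with $\ge K$ roots, hence zero, and $\mathcal{W}\equiv\mathcal{V}$. Evaluating at $(\omega_k,\omega_l)$ yields $\bfB_{k,l}=\bfB'_{k,l}$ for all $k,l$, that is, $\bfB=\bfB'$.

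I expect the main obstacle to be the bookkeeping of the second paragraph --- tracking exactly which honest-node guarantees survive a Byzantine leader and the Byzantine members of $\cI$, and checking that the surviving $(N-2f)^2$ agreements still clear the interpolation threshold $K^2$ under the stated parameter regime. The interpolation step and the evaluation identities for $\mathcal{W}$ and $\mathcal{V}$ are routine once the degree bounds and the evaluation points have been pinned down.
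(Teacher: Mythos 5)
Your proof is correct and takes essentially the same route as the paper: both distill the threshold signature (and unforgeability of the per-entry signatures) to the pointwise agreements $w_{i,j}=u_{j,i}$ on a grid of correct nodes of size at least $K\times K$, then deduce $\bfB=\bfB'$ from the degree bound on the bilinear form in $G_\cL$. The only material difference is cosmetic --- the paper restricts to $K$ correct nodes inside $\cI\cap\cJ$ and inverts the single $K\times K$ submatrix $G_\cK$, whereas you interpolate on a rectangular grid of correct nodes drawn separately from $\cI$ and $\cJ$ (needing only $N-2f\ge K$ rather than $N-3f\ge K$) --- and a small imprecision in citing Condition~\ref{condition:consistency} for $\coded{\bfv}_j$ when what you actually invoke is the consistency mechanism of Section~\ref{subsection:maintainingConsistency} applied to the $\Var{cksV}$ checksum.
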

\begin{proof}

The signature~$\langle QI(\cI)\rangle_\pi$ reveals the existence of a quorum~$\cJ$ such that for every correct node~$j\in\cJ$ and every correct node~$i\in\cI$, we have
$$
w_{i,j}=(G_\cL)^\intercal_i\bfB(G_\cL)_j=(G_\cL)^\intercal_j(\bfB')^\intercal(G_\cL)_i=u_{j,i}.
$$
Since we assume that~$N= (K-1)d+3f+1$, the quorums~$\cI$ and~$\cJ$ intersect on at least
\begin{align*}
    2(N-f)-N&=(K-1)d+f+1 -(K-1)+(K-1) =(K-1)(d-1)+f+K\ge f+K
\end{align*}
nodes, which contains at least~$K$ correct ones.

Let $\cK$ be a set containing these~$K$ correct nodes, and let~$G_{\cK}$ be a~$K\times K$ matrix containing the corresponding~$K$ columns of the Lagrange matrix~$G_\cL$. 
Since~$g_k^\intercal\bfB g_{k'}=g_{k'}^\intercal(\bfB')^\intercal g_k$ for every~$k,k'\in[K]$, it follows that~$G_\cK^\intercal\bfB G_\cK=G_\cK^\intercal\bfB' G_\cK$. 
By the MDS property of~$G_\cL$, the matrix~$G_\cK$ is invertible, and hence~$\bfB=\bfB'$.
\end{proof}




\subsection{Maintaining Validity (Condition~\ref{condition:validity})}\label{subsection:maintainingValidity}

So far, we have developed mechanisms that maintain homology and consistency. 
Together, every correct node~$i$ is performing verification on the coded outgoing strip~$\coded{\bfh}_i=(G_\cL)_i^\intercal \bfB$ and appending the coded incoming strip~$\coded{\bfv}_i=(G_\cL)_i^\intercal \bfB^\intercal$. 
We now present a communication-efficient scheme that employs coded computation to guarantee validity, such that no invalid transactions in the block can be appended to the blockchain. 

\new{
Specifically, we weave a mechanism into the existing protocol. 
It allows nodes to securely obtain the indicator vector~$g\in\{0,1\}^{QK}$.
Note that each of the entries of~$g$ is associated with a coded transaction in every coded incoming strip. 
A coded transaction should be zeroed-out if the corresponding entry is~$1$ (see Section~\ref{subsection:codedComputation}).
}


Recall that the degree of the polynomial verification function~$\bfF(z)$ is~$(K-1)d$, and hence it is uniquely defined by evaluations at any~${L=(K-1)d+1}$ distinct points. That is, for any distinct~$\beta_1,\ldots,\beta_L$, one may represent~$\bfF(z)$ as a linear combination of~$L$ Lagrange basis polynomials~$\Psi_1(z),\ldots,\Psi_L(z)$, i.e.,
\begin{equation*}
    \bfF(z)=\sum_{\ell\in[L]}\bfF(\beta_i)\Psi_\ell(z),~\text{where}~\Psi_\ell(z)=\prod_{l,\ell\in[L], l\neq \ell}\frac{z-\beta_l}{\beta_\ell-\beta_l}.
\end{equation*}



As a result, the coded outgoing result strips~$\bfF(\alpha_1),\ldots,\bfF(\alpha_N)$ can be represented as
\begin{equation}\label{eq:codedResult}
    \begin{bmatrix} \coded{\bfe}_{1}\\\vdots\\\coded{\bfe}_{N}\end{bmatrix}\overset{\eqref{equation:FeCodedUncoded}}{=}
    \begin{bmatrix}{\bfF(\alpha_1)}\\ \vdots\\{\bfF(\alpha_N)}\end{bmatrix}=
    G_{\cF,\alpha}^\intercal\cdot
    \begin{bmatrix} \bfF(\beta_1)\\\vdots\\\bfF(\beta_L)\end{bmatrix} =
    \begin{bmatrix}
\Psi_1(\alpha_1) & \Psi_1(\alpha_2) & \ldots &\Psi_1(\alpha_N) \\
\Psi_2(\alpha_1) & \Psi_2(\alpha_2) & \ldots & \Psi_2(\alpha_N) \\
\vdots &\vdots&\ddots&\vdots\\
\Psi_{L}(\alpha_1) & \Psi_{L}(\alpha_2) & \ldots & \Psi_{L}(\alpha_N) \\
\end{bmatrix}^\intercal\cdot 
\begin{bmatrix} \bfF(\beta_1)\\\vdots\\\bfF(\beta_L)\end{bmatrix},
\end{equation}
and the (uncoded) outgoing result strips can be represented as
\begin{equation}\label{eq:decodeResult}
    \begin{bmatrix}{\bfe}_{1}\\ \vdots\\{\bfe}_{K}\end{bmatrix}\overset{\eqref{equation:FeCodedUncoded}}{=}
    \begin{bmatrix}{\bfF(\omega_1)}\\ \vdots\\{\bfF(\omega_K)}\end{bmatrix}
    = G_{\cF,\omega}^\intercal\cdot
    \begin{bmatrix} \bfF(\beta_1)\\\vdots\\\bfF(\beta_L)\end{bmatrix}
    =\begin{bmatrix}
\Psi_1(\omega_1) & \Psi_1(\omega_2) & \ldots &\Psi_1(\omega_K) \\
\Psi_2(\omega_1) & \Psi_2(\omega_2) & \ldots & \Psi_2(\omega_K) \\
\vdots &\vdots&\ddots&\vdots\\
\Psi_{L}(\omega_1) & \Psi_{L}(\omega_2) & \ldots & \Psi_{L}(\omega_K) \\
\end{bmatrix}^\intercal\cdot 
\begin{bmatrix} \bfF(\beta_1)\\\vdots\\\bfF(\beta_L)\end{bmatrix}.
\end{equation}

Upon receiving the message from the leader in the~\FUNC{prepare} phase, node~$i$ computes the verification function~$\bfF$ and obtains its coded outgoing result strip~$\coded{\bfe}_{i}$ (\ralg{\ref{line:codedResults}}{\ref{alg:cc1}}). 
Node~$i$ multiplies it from the right with the Lagrange matrix~$G_\cL\in\bbF_q^{K\times N}$, and obtains
$
    c_{i,*}=(\coded{\bfe}_{i,1},\ldots,\coded{\bfe}_{i,K})\cdot G_\cL,
$
which equals to the~$i$-th row of the matrix
\begin{equation}
    \bfC = G_{\cF,\alpha}^\intercal \cdot 
    \begin{bmatrix}\bfF(\beta_1)^\intercal,\ldots,\bfF(\beta_L)^\intercal \end{bmatrix}^\intercal
    \cdot G_\cL.
\end{equation}

In the acknowledgment of the~\FUNC{prepare} message, node~$i$ replies the leader with~$c_{i,*}$ with its signatures on each entry (\ralg{\ref{line:encodeResults}}{\ref{alg:cc1}}). 
\new{The leader verifies if the signatures matches~$c_{i,*}$; this step is omitted in the pseudocode for clarity}.
Upon receiving a quorum of~$N-f$ such vectors, the leader stacks them on top of each other to form a~$(N-f)\times N$ matrix (which is a submatrix of~$\bfC$), and sends the~$j$-th column to every node~$j\in[N]$ in the~\FUNC{pre-commit} phase (\ralg{\ref{line:sendResultColumn}}{\ref{alg:cc1}}).
Note that the~$j$-th column of~$\bfC$ is the encoding of the~$j$-th column of matrix~$[\bfF(\beta_1)^\intercal,\ldots,\bfF(\beta_L)^\intercal]^\intercal\cdot G_\cL$ using the generator matrix~$G_{\cF,\alpha}$, which generates a Lagrange code of length~$N$ and dimension~$L$. 
As a result, every node can perform Reed-Solomon decoding after verifying the signature of each entry (line~\ref{line:verifySig}, Algorithm~\ref{alg:cc1}), and obtain the~$j$-th column of matrix~$[\bfF(\beta_1)^\intercal,\ldots,\bfF(\beta_L)^\intercal]^\intercal\cdot G_\cL$ (line~\ref{line:decodeTinyBlock}, Algorithm~\ref{alg:cc1}). 
The decoding is given in the function~\FUNC{Decode} which calls a Reed-Solomon decoder.
Since we have~$N\ge (K-1)d+3f+1$, decoding from~$N-f$ elements will be successful since there are at most~$f$ Byzantine nodes.


By left multiplying the decoded column with~$G^\intercal_{\cF,\omega}$, every correct node~$i$ obtains the~$j$-th column of the matrix~$
   G^\intercal_{\cF,\omega}\cdot
   \begin{bmatrix}\bfF(\beta_1)^\intercal,\ldots,\bfF(\beta_L)^\intercal \end{bmatrix}^\intercal
   \cdot G_\cL$.
This vector equals to the~$j$-th column of~$\begin{bmatrix}{\bfe}_{1}^\intercal, \ldots,{\bfe}_{K}^\intercal\end{bmatrix}^\intercal\cdot G_\cL$ 
by Equation~\eqref{eq:decodeResult}, which further equals to the~$j$-th coded incoming result strip by Equation~\eqref{eq:result block} and Equation~\eqref{eq:incoming result strip}, i.e.,~$\coded{\bfs}_j= \bfR\cdot (G_\cL)_j$.

Recall that the result block~$\bfR$ is a~$K\times K$ matrix whose each element~$r_{k,k'}$ stores the verification results of the~$Q$ transactions in the tiny block~$b_{k,k'}$. 
Therefore, a coded incoming results strip~$\coded{\bfs}_i$ contains~$K$ coded tiny result blocks; the~$k$-th one is a linear combination of~$r_{k,1},\ldots,r_{k,K}$ defined by~$(G_\cL)_k$. 
Hence, for~$l\in[Q]$, the~$l$-th entry in the~$k$-th coded tiny result block is a linear combination of verification results of transactions in~$\cS_{k,l}$; a set containing every~$l$-th transaction in~$b_{k,1},\ldots,b_{k,K}$.
If the entry is not a zero vector, it suggests that at least one of these verification results is not a zero vector, which further suggests at least one transaction in~$\cS_{k,l}$ is invalid. 
On the other hand, if the entry is a zero vector, node~$i$~\emph{cannot} conclude the validity of transactions in~$\cS_{k,l}$, as a linear combination of non-zero vectors might be the zero vector.

Recall that in an MDS code of dimension~$K$, every codeword is either the zero codeword, or has at most~$K-1$ zeros. 
In this regard, the~$l$-th entries in the~$k$-th coded tiny result block from all~$\coded{\bfs}_1,\ldots,\coded{\bfs}_N$ form a codedword of an~$[N,K]$ MDS code, and hence contains either all zero vectors, or at most~$K-1$ zero vectors (note that a vector is an element in the codeword). 

The former case implies that each of the~$l$-th transactions in~$b_{k,1},\ldots,b_{k,K}$ passes verification. 
The latter case implies that at least one of theses transactions is invalid, and the~$l$-th coded transaction in the~$k$-th coded tiny block of every coded incoming strip must be set to zero before being appended.
To simplify the problem, every node~$i$ creates a~\emph{binary results} vector~$g_{*,i}$, which is the~$i$-th column of matrix~$\bfG\in\{0,1\}^{QK\times N}$.
Each entry of~$g_{*,i}$ is associated with an entry of~$\coded{\bfs}_i$; it equals to~$0$ if the corresponding entry in~$\coded{\bfs}_i$ is a zero vector, and equals to~$1$ otherwise (line~\ref{line:binaryResults}, Algorithm~\ref{alg:cc1}). This operation is encapsulated in the function~\FUNC{Binary}, whose pseudocode implementation is omitted for its simplicity. 
Note that each row of~$\bfG$ is either all zeros, or contains at most~$K-1$ zeros. Clearly, the indicator vector~$g$ equals to the reduction of all columns of~$\bfG$ with operator bitwise OR.

\newcommand{\gw}{\Var{gw}}

Let~$\lambda$ be a~$(K+f, N)$ threshold signature scheme, and let~$\tau$ be a~$(f+1, N)$ threshold signature scheme. 
Using the binary results vector~$g_{*,i}$, node~$i$ obtains a~\emph{partial indicator} as the output of the function~\FUNC{partialIndicator}~(line~\ref{line:partialIndicator}, Algorithm~\ref{alg:cc1}). 
This function defines a length-$QK$ vector, denoted by~$\gw_i$, such that for every~$\ell\in[QK]$, 
\begin{equation*}
    \gw_i[\ell]=\begin{cases}\sig{\ell, 0 ,\header}_{\lambda,i} &g_{\ell,i}=0\\
   \sig{\ell, 1 ,\header}_{\tau,i} & g_{\ell,i}=1
    \end{cases}.
\end{equation*}

Node~$i$ sends~$\gw_i$ back to the leader in the acknowledgment of the~\FUNC{pre-commit} message.
The leader collects~$\gw_i$'s from a quorum of~$N-f$ nodes and merges them into a length-$QK$ vector~$\gw$ using function~\FUNC{mergeIndicators} (line~\ref{line:mergeIndicators}, Algorithm~\ref{alg:cc2}); the details are given as follows. 

Among the~$\ell$-th entries of the collected vectors~$\{\gw_j\}_{j\in\cJ}$, if there exist~$K+f$ partial signatures endorsing~$0$ (generated by the~$\lambda$ scheme), the leader generates and stores a valid signature~$\sig{\ell, 0 ,\header}_\lambda$ in the~$\ell$-th entry of~$\gw$. 
Otherwise, if there exists~$f+1$ partial signatures endorsing~$1$ (generated by the~$\tau$ scheme), the leader stores a valid signature~$\sig{\ell, 1 ,\header}_\tau$. 
Notice that exactly one of these cases must hold due to the following lemma. 
\new{Note that we implicitly assume that the leader is guaranteed to obtain responses from a quorum~$\cJ$ in the \FUNC{pre-commit} phase; such an assumption will be justified in Theorem~\ref{theorem:liveness} on the liveness property of our scheme.}

\begin{lemma}
Among the~$\ell$-th entries of the collected vectors~$\{\gw_j\}_{j\in\cJ}$ from a quorum of size $|\cJ|=N-f$, the leader is guaranteed to obtain~$K+f$ partial signatures endorsing~$0$, or~$f+1$ partial signatures endorsing~$1$, but not both.
\end{lemma}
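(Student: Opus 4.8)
The plan is to argue by cases on the zero-pattern of the~$\ell$-th row of the matrix~$\bfG$, exploiting the structural fact established just before the lemma: by the MDS property, this row is either identically zero or has at most~$K-1$ zero entries. Throughout I fix the quorum~$\cJ$, $|\cJ|=N-f$, from which the leader collected the partial indicators~$\{\gw_j\}_{j\in\cJ}$. Since at most~$f$ nodes in the whole system are Byzantine, $\cJ$ contains at least~$N-2f$ correct nodes, and every correct~$j\in\cJ$ contributes in position~$\ell$ a partial signature whose endorsed bit equals its true value~$g_{\ell,j}$ --- a~$\lambda$-partial signature on~$\sig{\ell,0,\header}$ when~$g_{\ell,j}=0$, and a~$\tau$-partial signature on~$\sig{\ell,1,\header}$ when~$g_{\ell,j}=1$ --- because (by consistency and homology) all correct nodes compute from the same block~$\bfB$, hence the same result matrix and the same row of~$\bfG$. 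The one arithmetic input I need is~$N\ge K+3f$, equivalently~$N-2f\ge K+f$; this follows from the standing assumption~$N\ge(K-1)d+3f+1$ together with~$d\ge 3$ (recall~$d=\max(T^{(t-1)}+1,\deg\Var{hash1},\deg\Var{hash2},3)$), which gives~$(K-1)d\ge K-1$.

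First I would handle the case where the~$\ell$-th row of~$\bfG$ is all zero. Then every correct~$j\in\cJ$ endorses~$0$, so the leader receives at least~$N-2f\ge K+f$ partial signatures on~$\sig{\ell,0,\header}$ under the~$(K+f,N)$ scheme~$\lambda$, meeting its threshold. Conversely, only Byzantine nodes can endorse~$1$ here, so at most~$f<f+1$ partial signatures on~$\sig{\ell,1,\header}$ are received, which is below the threshold of the~$(f+1,N)$ scheme~$\tau$; hence the~$0$-outcome is attainable and the~$1$-outcome is not.

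Next I would handle the complementary case where the~$\ell$-th row of~$\bfG$ has at most~$K-1$ zeros. Then at most~$K-1$ correct nodes in~$\cJ$ endorse~$0$, so at least~$(N-2f)-(K-1)\ge f+1$ correct nodes endorse~$1$, giving the leader enough~$\tau$-partial signatures to certify~$1$. Symmetrically, the total number of~$0$-endorsements available to the leader is at most~$(K-1)$ from correct nodes plus~$f$ from Byzantine ones, that is at most~$K-1+f<K+f$, so the~$\lambda$-threshold for~$0$ is never reached. Since the two cases are exhaustive and disjoint, exactly one of the two stated outcomes occurs, which is the claim.

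I do not anticipate a genuine obstacle here; the delicate part is purely the bookkeeping that keeps honest contributions (which furnish the lower bounds) separate from the at-most-$f$ adversarial ones (which furnish the upper bounds), plus the observation that a Byzantine node that withholds~$\gw_j$ or sends a malformed entry only shrinks the adversarial count, so the honest lower bounds are untouched and the stated guarantee is indeed the worst case. The only place that uses the global parameter regime is the inequality~$N\ge K+3f$ extracted above, so I would make that step explicit in the write-up.
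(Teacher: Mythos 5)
Your proposal is correct and takes essentially the same route as the paper: the same two-case split on whether the $\ell$-th row of $\bfG$ is all-zero or has at most $K-1$ zeros, the same separation of honest lower bounds from the at-most-$f$ adversarial upper bounds, and the same arithmetic reduction to $N\ge K+3f$ via $N\ge(K-1)d+3f+1$. Your bookkeeping in the non-zero case (counting correct nodes in $\cJ$ directly rather than first intersecting quorums and then discounting Byzantine members, as the paper does) is a minor cosmetic rearrangement, not a different argument.
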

\begin{proof}
For any~$\ell\in[QK]$, if the~$\ell$-th row of~$\bfG$ is all-zero, then at least
\begin{align*}
    N-2f&= (K-1)d+f+1\ge (K-1)d+f+1 -(K-1)+(K-1)= (K-1)(d-1)+f+K\ge f+K
\end{align*}
vector~$\gw_i$'s are from correct nodes; they all have zero~$\ell$-th entry and sign using the~$\lambda$ scheme. Meanwhile, there exist at most~$f$~$1$'s, all from the Byzantine nodes.

If the~$\ell$-th row of~$\bfG$ is not all-zero, then the number of nodes (at least~$N-(K-1)$) having~$1$'s must intersect with the quorum on at least~$(N-f)+N-(K-1)-N$ nodes, which equals to
\begin{align*}
    N-f-(K-1)&= (K-1)d+2f+1 -(K-1)= (K-1)(d-1)+2f+1\ge 2f+1
\end{align*}
nodes, out of which at least~$f+1$ are correct; they all endorse~$1$ and sign the entry using the~$\tau$ scheme. Also, there exist at most~$(K-1+f)$~$0$'s, out of which~$K-1$ are from correct nodes, and at most~$f$ are from Byzantine nodes. 
\end{proof}

The leader then broadcasts the vector~$\gw$ to every node in the~\FUNC{commit} phase. Every node~$i$ can learn the indicator vector~$g$ from~$\gw$, i.e., for every~$\ell\in[QK]$,
\begin{equation*}
    g[\ell]=\begin{cases} 0 & \gw[\ell]=\sig{\ell, 0 ,\header}_{\lambda} \\
   1& \gw[\ell]=\sig{\ell, 1 ,\header}_{\tau}
    \end{cases}.
\end{equation*}
It then uses the indicator variable to~``filter out'' invalid transactions in the coded incoming strip~$\coded{\bfv}_i$ (\ralg{\ref{line:updateStrip}}{\ref{alg:cc2}}).

\section{Discussion}\label{section:discussion}

In this section, we discuss the security,~\new{liveness}, and the communication complexity aspects of our design. In particular, we investigate the tradeoff between bit complexity and security level.

\subsection{Security}
The security level of our scheme is reflected by the upper bound of~$f$ compared with~$N$, i.e., the maximum fraction of Byzantine nodes that can be tolerated in the system. The following theorem shows that, for correct verification of transactions,~$f$ depends on the total number of nodes~$N$, the number of shards~$K$, and the degree~$d$ of the verification function. 
\begin{theorem}\label{theorem:computation}
If~$N\geq (K-1)\cdot d+3f+1$, our design provides coded consensus.
\end{theorem}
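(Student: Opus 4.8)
The plan is to show that the condition $N \geq (K-1)d + 3f + 1$ suffices to establish each of the three conditions of coded consensus (Conditions~\ref{condition:consistency}, \ref{condition:homology}, and \ref{condition:validity}) in turn, thereby proving the theorem. Since the bulk of the mechanical work has already been done in Sections~\ref{subsection:maintainingConsistency}--\ref{subsection:maintainingValidity}, the proof mostly assembles those pieces and verifies that the quorum-intersection arithmetic goes through under the stated bound. I would organize it as three claims.

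First, for \textbf{consistency}: I would argue that HotStuff's safety property (which holds whenever $N \geq 3f+1$, and this is implied by our bound since $(K-1)d \geq 0$) guarantees that all correct nodes agree on the same chain of headers, and in particular on the same \header{} at each epoch $t$. The \orange{orange} predicate on \ralg{\ref{line:consistencyPredicate}}{\ref{alg:cc1}} ensures that a correct node only acknowledges a \FUNC{prepare} message if its received coded outgoing strip $\coded{\bfh}_i$ \emph{agrees} with $\Var{cksH}$ in the header, via the \FUNC{agree} function. By the homomorphism of the fingerprinting function together with \cite[Theorem~3.4]{AVID-FP}, agreement of all correct nodes' fragments with a common checksum forces those fragments to be the correct encodings $(G_\cL)_i^\intercal \bfB$ of a single block $\bfB$, except with negligible probability. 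Hence every correct node obtains $\coded{\bfh}_i = (G_\cL)_i^\intercal \bfB$, which is Condition~\ref{condition:consistency}.

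Second, for \textbf{homology}: this is where the degree term $(K-1)d$ first becomes essential, and it is the main obstacle — everything else is comparatively routine bookkeeping. I would invoke Lemma~\ref{lemma:validSigQI} directly: a correct node only locks in the \FUNC{commit} phase after verifying $\sig{QI(\cI)}_\pi$, and Lemma~\ref{lemma:validSigQI} shows this signature certifies $\bfB = \bfB'$. The delicate point is that Lemma~\ref{lemma:validSigQI}'s proof requires the two quorums $\cI$ and $\cJ$ to intersect in at least $K$ \emph{correct} nodes, which is exactly what $N \geq (K-1)d + 3f + 1$ buys: $2(N-f) - N = N - 2f \geq (K-1)d + f + 1 \geq f + K$, so the intersection has $\geq f+K$ nodes, at least $K$ of them correct, and then invertibility of $G_\cK$ (by the MDS property of the Lagrange matrix) forces $\bfB = \bfB'$. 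Since $\coded{\bfv}_i = (G_\cL)_i^\intercal (\bfB')^\intercal = (G_\cL)_i^\intercal \bfB^\intercal$, Condition~\ref{condition:homology} follows.

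Third, for \textbf{validity}: I would note that under consistency and homology, every correct node computes $\coded{\bfe}_i = \bfF(\alpha_i)$ on the correct coded outgoing strip. The polynomial $\bfF$ has degree $(K-1)d$, so it is determined by $L = (K-1)d + 1$ evaluations, and the column-wise Reed--Solomon decoding described around Equations~\eqref{eq:codedResult}--\eqref{eq:decodeResult} recovers $\bfF(\beta_1),\ldots,\bfF(\beta_L)$ from any $N - f \geq L + 2f$ correct-or-corrupted evaluations, since $N - f - L = 2f$ leaves room to correct $f$ errors. I would then cite the second lemma in Section~\ref{subsection:maintainingValidity}, whose proof again uses the same quorum-intersection inequality to show the leader always obtains either $K+f$ partial signatures endorsing $0$ or $f+1$ endorsing $1$ on each coordinate, so every correct node recovers a well-defined indicator vector $g$ and, on \ralg{\ref{line:updateStrip}}{\ref{alg:cc2}}, zeros out exactly the coded transactions not formed exclusively from valid ones — which is Condition~\ref{condition:validity}. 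Combining the three claims, the protocol provides coded consensus whenever $N \geq (K-1)d + 3f + 1$, completing the proof.
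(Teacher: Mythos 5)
Your proof is correct and follows essentially the same decomposition as the paper's: HotStuff safety plus the homomorphic fingerprinting argument of AVID-FP for consistency, Lemma~\ref{lemma:validSigQI} for homology, and $[N,L]$ Reed--Solomon decodability (with $L=(K-1)d+1$ and $f$ errors among $N-f$ received entries) for validity. You supply somewhat more detail on the quorum-intersection arithmetic and the role of the \FUNC{agree} predicate, but the key lemmas and the overall structure match the paper's own argument.
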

\begin{proof}

First, HotStuff guarantees safety~\cite[Theorem~2]{Hotstuff}
(see Section~\ref{section:communication} for definitions) of the header chain when~$N\ge3f+1$, which is a weaker assumption than~$N\geq (K-1)d+3f+1$. 
\new{Note that the added mechanisms are irrelevant to the safety property, as no extra conditions on which nodes can accept a header are introduced.}
The property of homomorphic fingerprinting function assures the consistency between the coded fragments received by each node~\cite[Theorem~3.4]{AVID-FP}. Together, consistency is maintained.

Second, as seen in Lemma~\ref{lemma:validSigQI}, our method maintains homology between the coded incoming strips and the coded outgoing strips when~$N\geq (K-1)\cdot d+3f+1$.

Finally, in order to obtain the indicator vector~$g$, every node needs to decode an~$[N,L]$ Reed-Solomon code from~$N-f$ elements in the codeword, where~$L=(K-1)d+1$ (see Section~\ref{subsection:maintainingValidity}). Since~$N-f\geq (K-1) d+2f+1$, the property of Reed-Solomon code guarantees correct decoding in this case. Thus, validity is maintained. \qedhere
\end{proof}

\subsection{Liveness}
\new{
Although the proposed algorithm provides coded consensus, adversaries may conduct a liveness attack, i.e., prevent the system from processing new transactions.
In this section, we show that the proposed algorithm also provides liveness.

\begin{theorem}\label{theorem:liveness}
In the partial synchrony model, the proposed algorithm provides liveness after Global Stabilization Time (GST, see Section~\ref{subsection:smr}).
\end{theorem}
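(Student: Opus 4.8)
The plan is to derive liveness of the proposed protocol from the liveness of HotStuff~\cite[Theorem~4]{Hotstuff} by showing that the colored add-ons in Algorithms~\ref{alg:cc1} and~\ref{alg:cc2} are \emph{monotone in the leader's honesty}: whenever the current leader is correct, none of the new predicates can make a correct node reject the leader's proposal, and the leader can always assemble every quorum certificate its three phases require. Since all of the new traffic is piggybacked onto HotStuff's existing \FUNC{prepare}, \FUNC{pre-commit}, and \FUNC{commit} rounds, the communication pattern---and hence the view length the pacemaker must eventually grant---is unchanged up to the (ignored) local processing cost. Granting monotonicity, the standard argument applies: after GST the pacemaker eventually reaches a view $v$ with a correct $\FUNC{Leader}(v)$ that lasts long enough, every correct node accepts the proposed header, a fresh $\commitQC$ is formed, and a new block of transactions is committed and appended, which is exactly liveness.

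First I would recall, unchanged from HotStuff, why \FUNC{safeHeader} (line~\ref{line:safetyAndLiveness}, Algorithm~\ref{alg:utilities}) holds at every correct node when the leader is correct and $N\ge 3f+1$ (implied by our assumption): the leader collects \FUNC{new-view} messages from a quorum and sets $\highQC$ to the one of maximal view; any correct node's $\lockedQC$ of view $v'$ is backed by a $\prepareQC$ of view $v'$ held by a quorum, which intersects the \FUNC{new-view} senders in a correct node, so $\highQC.\viewNumber\ge v'$; extending the new header from $\highQC.\header$ therefore either extends each correct node's locked header or strictly dominates it in view number. No extra acceptance conditions on headers are introduced, so this liveness precondition is preserved.

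Next I would check the new predicates phase by phase. In the \FUNC{prepare} phase a correct leader runs \FUNC{Checksum} honestly on $\bfB$ and $\bfB^\intercal$ and ships each node the matching rows $\coded{\bfh}_i$, $\coded{\bfv}_i$; by the homomorphism of the fingerprinting function the \FUNC{agree} predicate (line~\ref{line:consistencyPredicate}, orange) holds deterministically at every correct node, each of which then forms $w_{i,*}$, $u_{i,*}$ via \FUNC{Encode}, computes $\bfF$ to obtain $c_{i,*}$, and replies with the signed vectors, so the leader receives $N-f$ well-formed acknowledgements. Because a correct leader disperses $\coded{\bfh}_i$ and $\coded{\bfv}_i$ generated from the \emph{same} block, $w_{i,j}=u_{j,i}$ for all honest $i,j$ (exactly as in the proof of Lemma~\ref{lemma:validSigQI}), so \FUNC{verifySig} (line~\ref{line:verifySig}, green) succeeds; moreover the vector of $N-f$ symbols a correct node receives in the \FUNC{pre-commit} phase agrees with a genuine codeword of the length-$N$, dimension-$L$ Lagrange code in all but at most $f$ of the received positions, the $f$ missing positions acting as erasures, and since $N\ge (K-1)d+3f+1 = L+3f$ the Reed--Solomon decoder on line~\ref{line:decodeTinyBlock} succeeds at every correct node; hence all $N-f$ correct nodes also acknowledge \FUNC{pre-commit}. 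Finally the leader's merge in the \FUNC{commit} phase is well defined: it does collect the $N-f$ correct nodes' valid partial signatures, and by the \FUNC{mergeIndicators} lemma exactly one of the two signing cases fires per coordinate, yielding a valid $\gw$; likewise a valid $\sig{QI(\cI)}_\pi$ is produced. Both are broadcast, verified, and acted upon by every correct node, so the view closes with a committed header.

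I expect the main obstacle to be bookkeeping rather than depth: one must verify that \emph{every} added predicate---\FUNC{agree}, the signature checks, the Reed--Solomon decoding step, and the two threshold-signature tallies---passes for a correct leader, and that the enlarged payloads do not alter the round structure, so that HotStuff's existing mechanism for eventually synchronising a sufficiently long honest view still applies verbatim. A secondary point to handle carefully is the apparent circularity flagged before the \FUNC{mergeIndicators} lemma---that the lemma presupposes a responsive quorum in the \FUNC{pre-commit} phase---which is discharged precisely here: under a correct leader after GST, all $N-f$ correct nodes pass the \FUNC{prepare}-phase predicates and therefore do respond, so the presupposed quorum exists.
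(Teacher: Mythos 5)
Your proof is correct and follows essentially the same route as the paper's: reduce to HotStuff's liveness theorem and then verify, phase by phase, that each added Boolean predicate (the \FUNC{agree} check in \FUNC{prepare}, the \FUNC{verifySig} checks in \FUNC{pre-commit}, and the $\sig{QI}$ verification in \FUNC{commit}) must evaluate to true at every correct node when the leader is correct, so the leader always assembles the needed quorum certificates. The extra detail you supply on \FUNC{safeHeader} and on Reed--Solomon decoding succeeding is accurate but not strictly required (the decoding step is a computation, not a blocking predicate), and your closing observation about discharging the presupposed \FUNC{pre-commit} quorum matches the remark the paper makes just before its merge lemma.
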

\begin{proof}
As shown in~\cite[Theorem~4]{Hotstuff}, HotStuff provides liveness after GST.
That is, a decision is reached given that there is a bounded duration~$T_f$, in which all correct nodes remain in the same view, and the view-leader is correct.
We show that this property is preserved with the added mechanisms.
Specifically, in our modified algorithm, there are precisely three occasions, one in each phase, in which liveness can be affected: line~\ref{line:consistencyPredicate}, Algorithm~\ref{alg:cc1}, line~\ref{line:verifySig}, Algorithm~\ref{alg:cc1}, and line~\ref{line:QIPredicate}, Algorithm~\ref{alg:cc2}. 
In these occasions, a correct leader might fail to collect sufficiently many responses, and thus liveness might not be guaranteed. 
We show that each of these occasions depends on a Boolean predicate which is guaranteed to be satisfied when the leader is correct, and thus liveness is preserved.


The first predicate (line~\ref{line:consistencyPredicate}, Algorithm~\ref{alg:cc1}) checks if the received header agrees with the received strips.
It is true in every correct node since a correct leader follows the protocol.
Therefore, a correct leader is guaranteed to receive valid responses in the \FUNC{prepare} phase from~$N-f$ nodes. 

For the second predicate (line~\ref{line:verifySig}, Algorithm~\ref{alg:cc1}), given the~$N-f$ valid responses from the \FUNC{prepare} phase, a correct leader is able to construct two~$(N-f)\times N$ matrices. The~$j$-th row of these matrices will make the green-colored function calls in line~\ref{line:verifySig}, Algorithm~\ref{alg:cc1} to return true for node~$j$. 
For the same reason, a correct leader is able to construct an~$(N-f)\times N$ matrix, whose~$j$-th row will make the blue-colored function call in the same line true. 
Therefore, every correct node will respond in the \FUNC{pre-commit} phase, and hence the correct leader will receive responses, each containing a valid partial signature on~$\Var{QI}$, from~$N-f$ nodes.

Finally, the leader is able to generate a valid signature~$\sig{\Var{QI}(\cI)}_\pi$ on the quorum identifier from the partial signatures.
Therefore, the third predicate (line~\ref{line:QIPredicate}, Algorithm~\ref{alg:cc2}) is true as well. 
\end{proof}
}

\subsection{Communication Complexity}

We analyze the communication complexity for the system to process a block~$\bfB$ that contains~$P=QK^2$ transactions, and then compare it to ordinary blockchain designs. The bit complexity of the different stages of our protocol is analyzed next, and sumarized in Table~\ref{table:cc}. Note also that the message complexity is linear thanks to the HotStuff protocol in use.


\begin{table}[!h]
\centering
\begin{tabular}{ccccc}
\hline
\multicolumn{1}{|c|}{}       & \multicolumn{1}{c|}{\textsc{prepare}}       & \multicolumn{1}{c|}{\textsc{pre-commit}}   & \multicolumn{1}{c|}{\textsc{commit}} & \multicolumn{1}{c|}{\textsc{decide}} \\ \hline
\multicolumn{1}{|c|}{Leader} & \multicolumn{1}{c|}{$O(N\log N+dQK \log N)$} & \multicolumn{1}{c|}{$O(NQ\log N)$} & \multicolumn{1}{c|}{$O(QK)$}         & \multicolumn{1}{c|}{$O(1)$}          \\ \hline
\multicolumn{1}{|c|}{Node}   & \multicolumn{1}{c|}{$O(NQ \log N)$}   & \multicolumn{1}{c|}{$O(QK)$}               & \multicolumn{1}{c|}{$O(1)$}          & \multicolumn{1}{c|}{N/A}             \\ \hline
\multicolumn{1}{l}{}         & \multicolumn{1}{l}{}                        & \multicolumn{1}{l}{}                       & \multicolumn{1}{l}{}                 & \multicolumn{1}{l}{}                
\end{tabular}\caption{Bit complexities of a single message from the leader to a node, and from a node to the leader, in each of the stages.}\label{table:cc}
\end{table}

In the~$\FUNC{prepare}$ phase, the leader sends a checksum and two coded fragments to each of the~$N$ nodes. 
A checksum contains~$N$ signatures over~$\bbF_q$, and a coded fragment contains~$\frac{|\bfB|}{K}$ bits. 
Recall that a block~$\bfB$ contains~$QK^2$ transactions, and each contains a lookup table whose size scales logarithmically with the number of transactions in a shard, same as the degree of the polynomial verification function~$d$ (see Section~\ref{subsection:pvf}). 
Further, since the underlying field~$\bbF_q$ must contain at least~$N$ distinct elements, it follows that the size of a field element is~$O(\log N)$ bits.
Together, the size of a block is~$O(dQK^2\log N)$, and the size of a coded strip is~$O(dQK\log N)$.
Note that the leader also broadcast the header, which contains~$2N$ hash values and~$2K$ fingerprints, each has a constant number of field elements.
Therefore, the message from the leader to a single node in this step is~$O(N\log N+dQK\log N)$. 

Also in the~$\FUNC{prepare}$ phase, every node~$i$ sends~$N$ signatures (line~\ref{line:signatureVectorX}, Algorithm~\ref{alg:cc1}), as well as~$N$ coded tiny result blocks~(line~\ref{line:encodeResults}), to the leader.
Recall that every coded tiny result block contains~$Q$ verification results, each is a length-$(C+E)$ vector, where~$C+E$ is the outputs of hash functions and hence constant.
Therefore, each message from a node to the leader in the~\FUNC{prepare} phase has a size of~$O(N+NQ(C+E)\log N)=O(NQ\log N)$.

In the~$\FUNC{pre-commit}$ phase, node~$i$ receives~$(N-2f)$ signatures~(line~\ref{line:sendColumnX}, Algorithm~$\ref{alg:cc1}$) and~$N-2f$ coded tiny result blocks~(line~\ref{line:sendResultColumn}) back from the leader. Therefore, the size of a message from the leader to a node is also~$O(N+NQ\log N)=O(NQ\log N)$.
Next, still in the~\FUNC{pre-commit} phase, every node~$i$ sends a partial indicator vector~(line~\ref{line:partialIndicator}) to the leader, whose size is~$O(QK)$ as it contains~$QK$ partial signatures.
In the~$\FUNC{commit}$ phase, every node receives a length-$QK$ vector of threshold signatures~(line~\ref{line:mergeIndicators}). 
In addition, every message sent to the leader contains a partial signature and hence has a size of~$O(1)$. 
Similarly, every message sent from the leader in the~$\FUNC{decide}$ phase contains a threshold signature~(in~\commitQC), and hence has size~$O(1)$. Together, the bit complexity of our design is as follows.

\begin{corollary}\label{corollary:bitComplexity}
For~$\mu<1/3$, to tolerate~$\mu N$ Byzantine nodes in a system with~$N$ nodes, the overall bit complexity for verifying a block of~$P=K^2Q$ transactions is~$O(\frac{Pd^2\log N}{(1-3\mu)^2})$.
\end{corollary}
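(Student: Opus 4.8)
The plan is to aggregate the per-message bit costs from Table~\ref{table:cc} over all $N$ nodes and all phases, and then substitute the parameter constraint $N = (K-1)d + 3f + 1$ (with $f = \mu N$) to express everything in terms of $P$, $d$, $N$, and $\mu$. First I would observe that in each phase the total bit complexity is $N$ times the per-message cost, since the leader exchanges one message with each of the $N$ nodes (up to constant factors; message complexity is linear by HotStuff). So from the table, the dominant contribution is the \textsc{prepare} phase: the leader-to-node cost $O(N\log N + dQK\log N)$ summed over $N$ nodes gives $O(N^2\log N + dQKN\log N)$, and the node-to-leader cost $O(NQ\log N)$ summed over $N$ nodes gives $O(N^2 Q\log N)$. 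The \textsc{pre-commit} phase contributes another $O(N^2 Q\log N)$, and the \textsc{commit}/\textsc{decide} phases contribute only $O(NQK)$ and $O(N)$, which are lower-order. Hence the overall bit complexity is $O(N^2 Q\log N + dQKN\log N)$.

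Next I would eliminate $N$, $K$, and $Q$ in favor of $P$, $d$, and $\mu$. From $N = (K-1)d + 3\mu N + 1$ we get $(1-3\mu)N = (K-1)d + 1 \le Kd$, so $N = O\!\left(\frac{Kd}{1-3\mu}\right)$, equivalently $K = \Omega\!\left(\frac{(1-3\mu)N}{d}\right)$. Using $P = QK^2$, so $Q = P/K^2$, the first term becomes
\[
N^2 Q \log N = \frac{N^2 P \log N}{K^2} = O\!\left(\frac{N^2 P \log N}{\big((1-3\mu)N/d\big)^2}\right) = O\!\left(\frac{P d^2 \log N}{(1-3\mu)^2}\right),
\]
and the second term $dQKN\log N = \frac{dPN\log N}{K} = O\!\left(\frac{dPN\log N \cdot d}{(1-3\mu)N}\right) = O\!\left(\frac{P d^2 \log N}{1-3\mu}\right)$, which is dominated by the first. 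Combining, the overall bit complexity is $O\!\left(\frac{P d^2 \log N}{(1-3\mu)^2}\right)$, as claimed.

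The main obstacle I anticipate is purely bookkeeping rather than conceptual: one must be careful that the per-message bounds in Table~\ref{table:cc} are correctly multiplied by the right count (all $N$ point-to-point links, not $N^2$, since communication is leader-centric), and that no phase was overlooked — in particular the header broadcast bundled into \textsc{prepare} contributes the $N^2\log N$ term, which I must check is subsumed by $\frac{Pd^2\log N}{(1-3\mu)^2}$; indeed $N^2\log N = O\!\left(\frac{K^2d^2\log N}{(1-3\mu)^2}\right)$ and since $P = QK^2 \ge K^2$ (as $Q\ge 1$), this is $O\!\left(\frac{Pd^2\log N}{(1-3\mu)^2}\right)$. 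The only subtlety is ensuring that every substitution uses the inequality in the correct direction (upper-bounding $N$ by $O(Kd/(1-3\mu))$ and lower-bounding $K$ by $\Omega((1-3\mu)N/d)$ consistently), after which the result follows by elementary algebra.
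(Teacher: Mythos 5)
Your proof is correct and follows essentially the same route as the paper: aggregate the per-message costs from Table~\ref{table:cc} over all $N$ leader-centric links to get $O(N^2Q\log N + dNQK\log N)$, then substitute the constraint $N = (K-1)d + 3f + 1$ with $f = \mu N$ to express the result in terms of $P$, $d$, $\mu$. The only cosmetic difference is that the paper introduces an intermediate ratio $n = (K-1)d/f$ and first argues the $N^2Q\log N$ term dominates, whereas you bound both terms directly via $(1-3\mu)N = (K-1)d + 1 \le Kd$; these are algebraically equivalent.
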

\begin{proof}

From Table~\ref{table:cc}, the overall bit complexity is
$$O(N^2\log N+dNQK\log N+N^2Q\log N+NQK+N)=O(N^2Q\log N+dNQK\log N).$$
Taking the maximum possible~$f$ given the parameter restriction in Theorem~\ref{theorem:computation}, we have that~$N= (K-1)\cdot d+3f+1$, and hence for~$n=\frac{(K-1)d}{f}$ we have
\begin{equation}\label{eq:CC}
\frac{N}{K}\approx\frac{N-1}{K-1}=\frac{(3+n)f}{nf/d}=d\left(1+\frac{3}{n}\right),
\end{equation}
Further, since~$
     \frac{N^2Q\log N}{dNQK\log N}=\frac{N}{Kd}
            \approx 1+\frac{3}{n}\geq 1$,
it follows that the overall bit complexity is~$O(N^2Q\log N)$.
As we have~$N= (K-1)d+3f+1$ by Theorem~\ref{theorem:computation}, the system tolerates a fraction~$\mu=\frac{f}{N}=\frac{1}{3+n+1/f}\approx \frac{1}{3+n}$ of Byzantine nodes. We can now express the overall bit complexity as a function of~$\mu$:
\begin{align*}
     O(N^2Q\log N)&=O( (N/K)^2K^2Q\log N)=O(Pd^2(1+3/n)^2\log N)=O\left(P\frac{d^2\log N}{(1-3\mu)^2}\right).\qedhere
\end{align*}
\end{proof}

That is, for a system of~$N$ nodes and the verification function of degree~$d$, the system designer can choose a value for~$\mu$, and the bit complexity for verifying a block scales quadratically with~$d$ and logarithmically with~$N$. 
Note that the degree~$d$ scales logarithmically with the number of transactions on one shard. We hereby rewrite the bit complexity for verifying one block as
$$
O(P\log^2 M(t) \log N),
$$
where~$M(t)$ equals to the number of transactions on one shard at epoch~$t$.

To show the novelty of our design, we define the~\emph{communication gain}~$\cG$ as the ratio between the bit complexity common in ordinary blockchain systems, which require every node to receive every transaction, and the bit complexity of our design; the former leads to an inevitable~$O(NP)$ bit complexity assuming that each transaction requires a constant amount of bits, and a block contains~$P$ transactions.
Specifically, if the system in our design tolerates~$\mu N$ Byzantine nodes, where~$\mu<\frac{1}{3}$, the communication gain is
\begin{equation}\label{eq:gain}
\cG=\frac{NP}{P\frac{d^2\log N}{(1-3\mu)^2}}=\frac{N(1-3\mu)^2}{d^2\log N}.
\end{equation}

It is evident from~\eqref{eq:gain} that the communication gain is significant for any fixed value of~$\mu$ and~$d$. Moreover, increasing the number of nodes in the system while keeping the remaining parameters fixed \emph{improves} the overall communication gain with respect to traditional designs; this is a highly desirable property of blockchain systems.

\subsection{Communication-Security Tradeoff}
By Corollary~\ref{corollary:bitComplexity}, the overall bit complexity is~$O(P\frac{d^2\log N}{(1-3\mu)^2})$, from which a tradeoff between security and communication is evident.
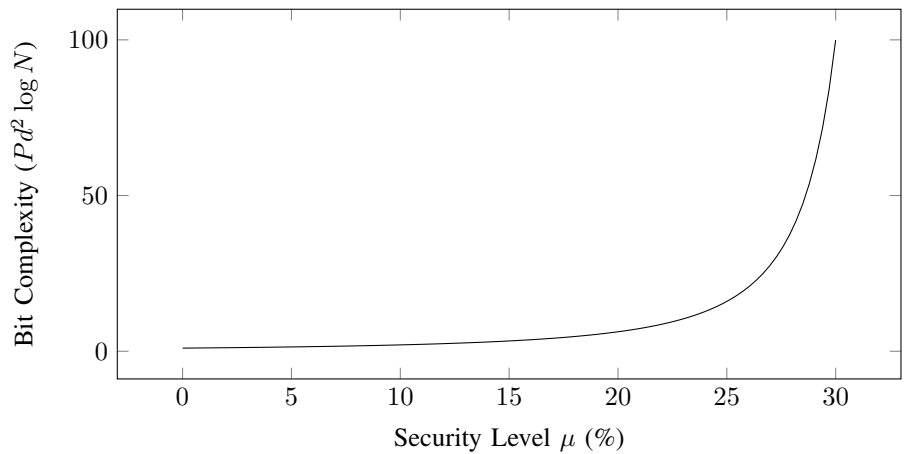
\begin{figure}[h]
\centering
\begin{tikzpicture}
  \begin{axis}[xlabel =Security Level~$\mu$ (\%),
  ylabel = {Bit Complexity~($Pd^2\log N$)},
  width=12cm,
  height=6.5cm
  ]
    \addplot[domain=0:30,samples=100]
    {1/(1-3*(x/100))^2)}; 
  \end{axis}
\end{tikzpicture} 
\caption{An illustration of the tradeoff between the security level~$\mu=\frac{f}{N}$ and communication bit complexity.}\label{figure:tradeoff}
\end{figure}
A lower~$\mu$ value yields low bit complexity, but degrades the security level (since~$\mu=f/N$).
In contrast, a higher~$\mu$ value allows the system to tolerate more Byzantine nodes, but inevitably leads to a higher bit complexity. 
In Figure~\ref{figure:tradeoff} we illustrate the function~$\mu\mapsto \frac{1}{(1-3\mu)^2}$, which describes the tradeoff between~$\mu$ and the bit communication complexity, measured relative to the baseline~$Pd^2\log N$ in Corollary~\ref{corollary:bitComplexity}.

\section{Future Work and Concluding Remarks}\label{section:future}
\new{
This paper focuses on verifying the validity of new transactions, but does not discuss how nodes can learn if an old transaction has already been redeemed. 
Directions for future work include incorporating light nodes, and developing algorithms for them to access raw data by querying a coded distributed system with Byzantine nodes. 
Finally, as this paper adopts a simplified UTXO model, the generalized multi-input multi-output setting is an interesting direction for future research.
In spite of these disadvantages, our work shows that coded computation can alleviate the communication burden in blockchain systems, while maintaining the computations and storage benefits of sharding. }

\end{document}